\newtheorem{theorem}{Theorem}
\newtheorem{lemma}[theorem]{Lemma}
\newtheorem{corollary}[theorem]{Corollary}
\newtheorem*{theorem*}{Theorem}
\newtheorem{definition}{Definition}
\theoremstyle{remark}
\newcommand{\threemaj}{\textsc{3-Majority}\xspace}
\newcommand{\hmaj}{$h$\textsc{-Majority}\xspace}
\newcommand{\twochoices}{\textsc{2-Choices}\xspace}
\newcommand{\averaging}{\textsc{Averaging}\xspace}
\newcommand{\voter}{\textsc{Voter}\xspace}
\newcommand{\median}{\textsc{Median}\xspace}
\newcommand{\undecidedstate}{\textsc{Undecided-State}\xspace}
\newcommand{\threedyn}{\threemaj dynamics\xspace}
\newcommand{\config}{\bm{x}}
\newcommand{\myalpha}{\textsc{alpha}\xspace}
\newcommand{\mybeta}{\textsc{beta}\xspace}
\newcommand{\biastime}[1]{s\left (#1\right )}
\newcommand{\biaseq}{s_{\text{eq}}}
\newcommand{\pr}[1]{\mathbb{P}\left [#1\right ]}
\newcommand{\expect}[1]{\mathbb{E}\left [#1\right ]}
\newcommand{\given}{\ \middle| \ }
\newcommand{\nat}{\mathbb{N}}
\newcommand{\real}{\mathbb{R}}
\newcommand{\abs}[1]{\left\lvert #1 \right\rvert}
\newcommand{\myTheta}[1]{\Theta\left (#1\right )}
\newcommand{\myOmega}[1]{\Omega\left (#1\right )}
\newcommand{\mybigo}[1]{\mathcal{O} \left (#1\right )}
\newcommand{\bigo}[1]{\mathcal{O}}
\newcommand{\filtration}[1]{\mathcal{F}_{#1}}
\newcommand{\xmin}{x_{min}}
\newcommand{\xmax}{x_{max}}
\newcommand{\diff}{\mathrm{d}}
\newcommand{\punif}{p}
\begin{document}

	\title{Phase transition of the 3-majority opinion dynamics with noisy interactions}
	
	\author{Francesco d'Amore} 
	\author{Isabella Ziccardi}
	\affil{Bocconi University, BIDSA, Italy. {\protect\\ {\tt francesco.damore2@unibocconi.it} \protect\\ {\tt isabella.ziccardi@unibocconi.it}}}
	
	\date{}
	
	\maketitle 
	
	\begin{abstract}
    Communication noise is a common feature in several real-world scenarios where systems of agents need to communicate in order to pursue some collective task. Indeed, many biologically inspired systems that try to achieve agreements on some opinion must implement \emph{resilient} dynamics, i.e. are not strongly affected by noisy communications. In this work, we study the \textsc{3-Majority} dynamics, an opinion dynamics that has been shown to be an efficient protocol for the majority consensus problem, in which we introduce a simple feature of uniform communication noise, following D'Amore et al.\ (2022). We prove that, in the fully connected communication network of $n$ agents and in the binary opinion case, the process induced by the \textsc{3-Majority} dynamics exhibits a phase transition. For a noise probability $p < 1/3$, the dynamics reaches in logarithmic time an almost-consensus metastable phase which lasts for a polynomial number of rounds with high probability. We characterize this phase by showing that there exists an attractive equilibrium value $\biaseq \in [n]$ for the bias of the system, i.e.\ the difference between the majority community size and the minority one. Moreover, we show that the agreement opinion is the initial majority one if the bias towards it is of magnitude $\Omega(\sqrt{n\log n})$ in the initial configuration. If, instead, $p>1/3$, we show that no form of consensus is possible, and any information regarding the initial majority opinion is lost in logarithmic time with high probability. 
    Despite more communications per round being allowed, the \textsc{3-Majority} dynamics surprisingly turns out to be less resilient to noise than the \textsc{Undecided-State} dynamics, whose noise threshold value is $p = 1/2$.
\end{abstract}

\noindent
{\bf Keywords:} opinion dynamics, consensus problem, randomized algorithms, distributed computing.

\section{Introduction}
The \emph{consensus problem} is a fundamental problem in distributed computing \cite{BecchettiCN20} in which we have a system of agents supporting opinions that interact between each other by exchanging messages, with the goal of reaching an agreement on some \emph{valid} opinion (i.e.\ an opinion initially present in the system). In particular, we focus on the \emph{majority consensus problem} where the goal is to converge towards the initial majority opinion. The numerous theoretical studies in this area find justifications in many different application scenarios, ranging from social networks \cite{MosselNT14,AcemogluCFO13}, swarm robotics \cite{Bayindir16}, cloud computing, communication networks \cite{RuanM08}, and distributed databases \cite{DietzfelbingerGMMPR10}, to biological systems \cite{FeinermanHK17,FraigniaudN19}. As for the latter, the goal of the majority consensus problem is to model some real-world scenarios where biological entities need to communicate and agree in order to pursue some collective task. Many biological entities in different real situations perform this type of process, e.g.\ molecules \cite{Carroll04}, bacteria \cite{Bassler02}, flock of birds \cite{Ben-ShaharDDS14}, school of fish \cite{SumpterKJCW08}, or social insects \cite{FranksPMBS02}, such as honeybees \cite{reina2017}.

In such scenarios, some form of noise often affects communication among agents. For this reason, one of the main goals of network information theory is to guarantee reliable communications in noisy networks \cite{GK2011}. 
In this context, error-correcting codes are very effective methods to reduce communication errors in computer systems \cite{KoetterK08,Todd05}, and this is why many theoretical studies of the (majority) consensus problem assume that communication between entities occurs without error, and instead consider some adversarial behavior (e.g.\ byzantine fault \cite{BecchettiCNPT16}). 
Despite their effectiveness in computer applications, error-correcting codes are quite useless if we want to model consensus in biological systems. 
Indeed, they involve sending complicated codes through communication links, and it is reasonable to assume that biological type entities communicate between each other in a simpler way.
For this reason, in recent years many works have been focusing on the study of opinion dynamics where the communication between entities is unreliable and subjected to uniform noise \cite{FeinermanHK17,FraigniaudN19,dAmore2022undecidedSI,CrucianiMQR21}.

Opinion dynamics do not have a formal definition, but we provide an informal one for the sake of the intuition, following \cite{natale2017}.
\begin{definition}[Opinion dynamics -- informal]\label{def:intro:dynamics}
    An opinion dynamics is a synchronous distributed algorithm characterized by a very simple structure. In this structure, the state of a node at round \(t\) depends only on its own state and a symmetric function of the multiset of states of its neighbors at round \(t-1\). 
    This function defines the update rule and can be randomized.
    Furthermore, the update rule is the same for every graph and every node and does not change over time. 
\end{definition}
However, within the  constraints of \cref{def:intro:dynamics}, one can still devise very complex updating rules, which would violate the spirit of simplicity that characterizes the concept of opinion dynamics.

The first consensus dynamics (i.e.\
 opinion dynamics for the consensus problem) that have been studied in the presence of noise communication are linear opinion dynamics, such as the \voter dynamics and the \averaging dynamics.
In these dynamics, the function describing a single agent's state update rule is linear with respect to the states of its neighbors.
Interestingly, they were studied in the presence of uniform noise communication \cite{lin2007} or of \textit{stubborn agents} (i.e.\ agents that never change opinion)  \cite{mobilia2003,mobilia2007,yildiz2013}.
In these settings, only \emph{metastable} forms of consensus can be achieved, where a large subset of the agents agree on an opinion while other opinions remain supported by smaller subsets of agents.
However, the \voter model has a slow convergence time even in fully connected networks and a large initial bias towards some majority opinion \cite{hassin2001}, and the \averaging dynamics requires agents to perform non-trivial computation and, more importantly, to have large local memory. For these reasons, linear opinion dynamics struggle explaining the observed metastable consensus in multi-agent systems  \cite{boczkowski2019,condon2019,natale2017},
and many research papers have begun to investigate new, more plausible, non-linear opinion dynamics.

In this work, we consider the \threemaj dynamics, which is based on a majority rule, the latter being widely employed also in the biological research field \cite{dong2010,Chaouiya2013}. In particular, we introduce in the system a uniform communication noise feature, following the definition of \cite{dAmore2022undecidedSI}. 
Such dynamics, without communication noise, has a similar behavior to the \undecidedstate dynamics \cite{BecchettiCN20}. As we describe in the next section, the two dynamics behave similarly even in the presence of uniform noise, %
as both exhibit a phase transition, but with different thresholds. Indeed, although the \threemaj dynamics makes use of more per-round communications, it turns out to be less resilient to noise than the \undecidedstate dynamics.

\subsection{Our result}

In this work, we study the \threemaj dynamics over a complete network of $n$ agents holding opinions from a binary set $\Sigma = \{\myalpha,\mybeta\}$. In each round, each agent pulls the opinions of three neighboring agents chosen independently uniformly at random and updates its opinion to the majority of the three. This dynamics is shown to be a fast, robust protocol for the majority consensus problem in different network topologies, ranging from complete graphs to sparser graphs \cite{BecchettiCN20}. For further details on previous results about the \threemaj dynamics, we defer the reader to \cref{sec:related}.

In this work, we introduce to the \threemaj dynamics an \emph{uniform communication noise} feature, following the definition in \cite{dAmore2022undecidedSI}, for which we give an equivalent formulation.
Each communication with a sampled neighbor is noisy with probability $\punif \in (0,1)$, i.e.\ with probability $p$ the received opinion is sampled u.a.r.\ in the opinion set  $\Sigma$. Instead, with probability $1-\punif$ the communication is unaffected by noise.

Even if considering the complete graph is a strong assumption for such communication networks, we note that, in every round, an agent pulls an opinion from three neighbors, resulting in a round-by-round communication pattern that forms a dynamic graph with $\mybigo{n}$ edges. Furthermore, this model can capture the behavior of bio-inspired multi-agent systems where mobile agents meet randomly at a relatively high rate. For more details about models for bio-inspired swarms of agents, we refer to  \cite{valentini2017}.

To keep track of the progress of the dynamics towards consensus, we describe the dynamics via the bias at time $t$, denoted by $s_t$, which represents the difference between the sizes of the majority and minority opinion communities at time $t$. Note that the protocol, in the case of a complete graph of fixed size \(n\) with binary opinions, is completely described by $\{s_t\}_t$.
We prove that, in the aforementioned setting, the \threemaj dynamics exhibit a phase transition. Our results are summarized in the following theorem.

\begin{theorem*}
Let $\{s_t\}_{t \geq 0}$ be the bias of the process induced by the \threemaj dynamics
with uniform noise probability $p$, and let $s_0$ the initial value of the bias. The following statements hold.
\begin{enumerate}
    \item 
    If $p<1/3$ and if $s_0 = \Omega(\sqrt{n \log n})$, consider $\biaseq = \frac{n}{1-p}\sqrt{\frac{1-3p}{1-p}}$, and let $\varepsilon>0$ be any sufficiently small constant. Then, there exists   a time $\tau_1=\mathcal{O}(\log n)$ such that, w.h.p.,\footnote{An event holds \emph{with high probability} (w.h.p.\ in short) with respect to $n$ if the probability that it occurs is at least $1 - n^{-\Theta(1)}$.} the process at time $\tau_1$ reaches a metastable almost-consensus phase characterized by the equilibrium point $\biaseq$, i.e.\
    \[s_{\tau_1} \in [(1-\varepsilon)\biaseq, (1+\varepsilon)\biaseq].\]
    Moreover, the bias remains confined in such interval for $n^{\Theta(1)}$ rounds w.h.p.
    \item If $p <1/3$ and if $s_0=\mathcal{O}(\sqrt{n \log n})$, there exists a time $\tau_2=\mathcal{O}(\log n)$ such that, w.h.p., the system becomes unbalanced towards an opinion, i.e.\
    \[s_{\tau_2}=\Omega(\sqrt{n \log n}).\]
    \item If $p>1/3$ and if $s_0 = \Omega(\sqrt{n \log n})$, there exists a time $\tau_3=\mathcal{O}(\log n)$ such that, w.h.p., at time $\tau_3$ the majority opinion is lost, i.e.\ $s_{\tau_3}=\mathcal{O}(\sqrt{n})$. In addition, with constant probability, at time $\tau_3+1$ the majority opinion changes. Moreover, for $n^{\Theta(1)}$ additional rounds the absolute value of the bias is $\mathcal{O}(\sqrt{n \log n})$ w.h.p.
\end{enumerate}
\end{theorem*}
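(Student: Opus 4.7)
The plan is to track the bias $s_t$ via its one-step drift and combine this with per-round concentration to handle all three regimes in a unified way. First I would let $x_t$ denote the number of agents supporting opinion~1 and write $s_t = 2x_t - n$. Since each of the three samples an agent pulls equals opinion~1 with probability $q_t = (1-p)\,x_t/n + p/2$, and since the binary majority over three samples equals opinion~1 with probability $q_t^2(3-2q_t)$, a short computation gives the closed-form drift
\[
\expect{s_{t+1} \given s_t} = \frac{3(1-p)}{2}\, s_t - \frac{(1-p)^3}{2 n^2}\, s_t^3.
\]
Its non-zero fixed points are $\pm \biaseq = \pm\frac{n}{1-p}\sqrt{(1-3p)/(1-p)}$, which are real precisely when $p < 1/3$; the threshold $p = 1/3$ is exactly where the linear coefficient $\tfrac{3}{2}(1-p)$ crosses $1$, so $0$ changes from unstable to globally attracting. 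Conditional on $s_t$, the next bias is a sum of $n$ independent bounded increments, so a Hoeffding/Chernoff bound yields $|s_{t+1} - \expect{s_{t+1} \given s_t}| = \mybigo{\sqrt{n \log n}}$ w.h.p.\ per round.

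With this drift in hand, I would handle the three bullets by isolating the relevant phases. For the first bullet, while $s_t \le (1-\varepsilon)\biaseq$ the cubic term is dominated by the linear one and the drift gives $\expect{s_{t+1}\given s_t} \ge (1+\delta)\,s_t$ for some constant $\delta>0$; iterating this with the $\mybigo{\sqrt{n\log n}}$ concentration error drives $s_t$ into $[(1-\varepsilon)\biaseq,(1+\varepsilon)\biaseq]$ in $\mybigo{\log n}$ rounds from $s_0 = \myOmega{\sqrt{n\log n}}$. Inside this interval the drift becomes restoring---its sign is opposite to that of $s_t - \biaseq$ with magnitude $\myTheta{s_t - \biaseq}$---and since the interval has width $\myTheta{n}$ while per-step fluctuations are only $\mybigo{\sqrt{n\log n}}$, a standard Hajek-style supermartingale / exit-time argument on the potential $(s_t - \biaseq)^2$ yields metastability for $n^{\myTheta{1}}$ rounds w.h.p. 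For the third bullet the linear coefficient is contracting, so $s_t$ shrinks geometrically from $\myOmega{\sqrt{n\log n}}$ to $\mybigo{\sqrt{n}}$ within $\mybigo{\log n}$ rounds; at this scale the $\myTheta{\sqrt{n}}$ fluctuations dominate the drift, and a Berry--Esseen or local-CLT anti-concentration bound applied to $s_{\tau_3+1} - \expect{s_{\tau_3+1}\given s_{\tau_3}}$ forces a sign flip with constant probability. The same contractive drift combined with per-step concentration then keeps $|s_t| = \mybigo{\sqrt{n\log n}}$ for $n^{\myTheta{1}}$ additional rounds by a supermartingale exit argument as before.

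The second bullet---symmetry breaking for $p<1/3$ when $s_0 = \mybigo{\sqrt{n\log n}}$---is the step I expect to be the main technical obstacle. Here the drift is expanding, but the initial bias is of the same order as the per-step fluctuations, so one cannot simply iterate the deterministic recurrence. I would first use a Berry--Esseen anti-concentration bound to show that after a single step $|s_1| = \myOmega{\sqrt{n}}$ with at least constant probability, even when $s_0 = 0$. Then I would couple $|s_t|$ to a biased random walk with multiplicative drift $1+\delta$ and $\mybigo{\sqrt{n\log n}}$ additive perturbations, and run the amplification argument of the first bullet on this coupled process to conclude that $|s_t| \ge c\sqrt{n\log n}$ within $\mybigo{\log n}$ rounds. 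Controlling the rare rounds where the noise cancels the drift---without losing the w.h.p.\ guarantee over the full $\mybigo{\log n}$ horizon---requires a two-scale martingale bookkeeping that I expect to dominate the technical length of the proof.
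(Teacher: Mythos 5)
Your outline follows essentially the same route as the paper: the same closed-form drift $\expect{s_{t+1}\given s_t}=\tfrac{3(1-p)}{2}s_t-\tfrac{(1-p)^3}{2n^2}s_t^3$ with fixed points $\pm\biaseq$, per-round Hoeffding concentration of $s_{t+1}$ as a sum of $n$ bounded independent increments, Berry--Esseen anti-concentration both for the symmetry-breaking kick-start and for the constant-probability sign flip at scale $\sqrt{n}$, and multiplicative amplification plus a supermartingale argument for the rest. The paper packages the symmetry-breaking amplification into a generic lemma (constant probability of exceeding $h\sqrt n$ from any state, plus multiplicative growth with failure probability $e^{-c_2 f(x)^2/n}$, combined via an exponential potential on a subsampled chain), which is exactly the ``two-scale martingale bookkeeping'' you anticipate; and for metastability it uses a direct per-round containment bound with failure probability $n^{-\Theta(1)}$ union-bounded over $n^{c}$ rounds rather than a Hajek-style potential, but either works.

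One step would fail as literally written: in the $p>1/3$ regime you claim $s_t$ ``shrinks geometrically from $\Omega(\sqrt{n\log n})$ to $\mathcal{O}(\sqrt n)$ within $\mathcal{O}(\log n)$ rounds'' and implicitly want this w.h.p. But once $|s_t|$ drops below $\sqrt{n\log n}$, the event that the contraction is realized at a given round only holds with probability $1-e^{-\Theta(s_t^2/n)}$, which is merely $1-\Theta(1)$ near the target scale $\sqrt n$; a union bound over the $\mathcal{O}(\log n)$ descent rounds therefore does not give a $1-n^{-\Theta(1)}$ guarantee. The paper instead applies a Lehre--Witt drift theorem to the stopped process $X_t=|s_t|\cdot\mathbf{1}\{|s_t|\ge s_{\min}\}$, whose exponential tail on the hitting time yields the w.h.p.\ bound; this in turn requires a separate lemma controlling $\expect{|s_t|\given s_{t-1}=s}$ (not just $|\expect{s_t\given s_{t-1}=s}|$) via the $\mathcal{O}(\sqrt n)$ standard deviation and Jensen, valid only for $s\ge 2\sqrt n/\varepsilon^2$. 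You would need to add both ingredients; the rest of your plan matches the paper's proof.
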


Interestingly, when \(p < 1/3\), the equilibrium point $\biaseq$ is the fixed point of the function determining the expectation of the bias at the next step, i.e.\ $\expect{s_{t+1} \mid s_t = \biaseq} = \biaseq$.

Our results show that the \threemaj dynamics is less resilient to noise compared to the \undecidedstate dynamics, despite allowing more communication per-round. Indeed, in \cite{dAmore2022undecidedSI} (conference version \cite{damore20}), it is demonstrated that the phase transition for the \undecidedstate dynamics has a threshold at \(p = 1/2\).\footnote{
We remark that in \cite{dAmore2022undecidedSI}, the threshold for the \undecidedstate dynamics is  \(p' = 1/6\) and corresponds to \(p = 1/2\) in our model. In fact, in the cited work, each sampled opinion is non-noisy with probability $1-2p'$, otherwise it turns into the opposite opinion with probability $p'$ and into the undecided state with probability $p'$. This corresponds to a framework where with probability \(1 - 3p'\) the opinion is non-noisy, while with probability \(
p = 3p'\) the received opinion is sampled u.a.r.\ among the possible ones.}
In the \undecidedstate dynamics, at each round, each agent pulls a single neighboring opinion \(x\) uniformly at random. If the agent's former opinion \(y\) differs from \(x\), the agent becomes \emph{undecided}. Once undecided, the agent will adopt any subsequent opinion it encounters.

The behavior of the \undecidedstate dynamics with noise is similar to the one of the \threemaj dynamics stated in the above theorem. If $p$ is below the threshold value,
the dynamics w.h.p.\ rapidly breaks the symmetry and converges in logarithmic time to a metastable phase of almost-consensus. If otherwise $p$ is above the threshold, no form of consensus is possible, since the bias keeps bounded by $\mybigo{\sqrt{n\log n}}$ for a polynomial number of rounds, w.h.p. In comparison to our work, \cite{dAmore2022undecidedSI} do not find the precise value at the equilibrium point, but we believe that also the \undecidedstate dynamics has an equilibrium point if $p$ is below the threshold, as the authors show in the experimental section (\cite[Figure 1E]{dAmore2022undecidedSI}).

In \cref{sec:dynamics:simulations}, we provide some experiments, which confirm our theoretical results. In addition, we simulate the \threemaj dynamics over communication graphs sampled as Erd\H{o}s Rényi random graphs with different parameters. The latter results provide experimental evidence that the \threemaj dynamics is fast and resilient even in sparser communication graphs with strong connection properties, and phase transitions similar to the fully connected case are exhibited. For this reason, we believe that sparser topologies are worth to be analyzed.
Furthermore, it would be interesting to see whether the \threemaj dynamics with an arbitrary number of possible opinions, with the same noise model, has the exact same phase transition at the noise threshold value $\punif = 1/3$. We remark that when $p = 1/3$, this corresponds to the fact that, for each node and at each round, exactly one communication among the three ones is noisy in expectation.

 \subsection{Related Works}\label{sec:related}

\paragraph{Consensus problem.} The study of the \threemaj dynamics arises on the ground of the results obtained for the \median dynamics in \cite{DoerrGMSS11}. The \median dynamics considers a totally ordered opinion set, in which each agent pulls two neighbor opinions $i,j$ u.a.r.\ and then updates its opinion $k$ to the median between $i,j$, and $k$. The  \median dynamics is a fault-tolerant, efficient dynamics for the majority consensus problem. However, as pointed out in \cite{BecchettiCN20}, the \median dynamics may not guarantee with high probability convergence to a valid opinion in case of the presence of an adversary, which is needed for the consensus problem. Moreover, the opinion set must have an ordering, property that might not be met by applicative scenarios such as biological systems \cite{BecchettiCN20}. These facts naturally lead researchers to look for efficient dynamics that satisfy the above requirements.

To the best of our knowledge, \cite{abdullah2015} is the first work analyzing the \hmaj dynamics. In detail, in the \hmaj dynamics we have $n$ nodes and, at every round, every node pulls the opinion from $h$ random neighbors and sets his new opinion to the majority one (ties are broken arbitrarily). 
More extensive characterizations of the \threemaj dynamics over the complete graph are given in \cite{BecchettiCNPST17,berenbrink2017, BecchettiCNPT16,GhaffariL18}. 

In \cite{BecchettiCNPST17}
it is shown that the \threemaj dynamics is a fast, fault-tolerant protocol for (valid) majority consensus in the case of $k \ge 2$ colors, provided that there is an initial bias towards some majority opinion. Furthermore, \cite{BecchettiCNPST17} shows an exponential time-gap between the \threemaj consensus process and the median process in \cite{DoerrGMSS11}, thus establishing its efficiency. 
In \cite{BecchettiCNPT16}, the analysis is extended to any  initial configuration in the many-color case, in the presence of a different kind of bounded adversaries. The authors of \cite{BecchettiCNPT16} emphasize how the absence of an initial majority opinion considerably complicates the analysis, in that it must be proved that the process breaks the initial symmetry despite the presence of the adversary. Indeed, before the symmetry breaking, the adversary is more likely to cause undesired behaviors. 
The strongest result about the convergence of the \threemaj is in \cite{GhaffariL18}. The authors show that in the case of $k$ opinions, the process converges in time $O(k \log n)$ rounds, and this result it is tight when $k= O(\sqrt{n})$. 
The \threemaj dynamics is also studied in different graph topologies: \cite{kang2019} analyzes the \threemaj process in graphs of minimum degree $n^\alpha$, with $\alpha = \myOmega{(\log \log n)^{-1}}$, starting from random biased binary configurations.

Another efficient opinion dynamics for the majority consensus problem
is the \twochoices dynamics, in which each agent samples two neighbors u.a.r.\ and updates its opinion to the majority opinion among its former opinion and the two sampled neighbor opinions if there is any, otherwise it keeps its opinion. We remark that the expected round-by-round behaviors of the \twochoices dynamics and that of the \threemaj are the same, while the actual behaviors differ substantially in high probability \cite{berenbrink2017}.
For example, we have run simple experiments that suggest that our uniform noise model on the \twochoices dynamics yields a threshold noise value $\punif = 1/2$, just like the \undecidedstate dynamics.

As the \twochoices and the \threemaj dynamics, the \undecidedstate dynamics is an efficient majority consensus protocol, with the difference that it requires only one communication per-round for each agent. Further description was already given in the previous section. It is worth mentioning the more recent work \cite{bankhamer2021}, which analyzes a variant of the \undecidedstate dynamics in the many-color case starting from any initial configuration. 
For an overview on the state of the art about opinion dynamics we defer the reader to \cite{BecchettiCN20}.

\paragraph{General consensus algorithms in the GOSSIP model.}
In \cite{GhaffariP16a,BerenbrinkFGK16}, the authors address the majority consensus problem in the GOSSIP model where the underlying communication network is fully 
connected and the agents may support $k$ different opinions. 
In the GOSSIP model, at each round every node samples a constant number of neighbors u.a.r.\ and exchanges \(B\) bits with it, where \(B\) usually is \( \mybigo{\log n}\).
In \cite{GhaffariP16a}, the authors propose an algorithm achieving majority consensus in $\mathcal{O}(\log k \log n)$ rounds, where the agents have local memory of size $\log k + \mathcal{O}(1)$ bits and can send messages of size $\log k + \mathcal{O}(1)$ bits. 
The majority consensus is reached provided that the initial bias is at least $\Omega( \sqrt{n \log n})$. 
In \cite{BerenbrinkFGK16}, the authors extend the result by \cite{GhaffariP16a} and provide a first algorithm achieving the majority consensus in $O(\log k \log \log n)$ rounds, where each agent uses $\log k + \Theta(\log \log k)$ bit of memory, and a second algorithm with runtime $O(\log  n \log \log n)$ which uses $\log k +4$ bit of memory for each agents. Both algorithms require an initial bias of $\Omega(\sqrt{n} \log^2 n)$.
Notice that the algorithms provided by \cite{GhaffariP16a,BerenbrinkFGK16} are more complex algorithms w.r.t.\ the consensus dynamics we mentioned before (e.g.\ \threemaj, \twochoices, \undecidedstate), as they work in \emph{phases}: the rules defining local computations and communications change over time and, hence, require higher local memory and communication bandwidth.  
Hence, such algorithms violate the constraints of \cref{def:intro:dynamics}.

\paragraph{Consensus problem in the presence of noise or stubborn agents.}
The authors of \cite{WangL09} initiate the study of the consensus problem in the presence of communication noise. They consider the Vicsek model \cite{vicsek1995}, in which they introduce a noise feature and a notion of robust consensus. Subsequently, dynamics for the consensus problem with noisy communications have received considerable attention. This direction is motivated, among many reasons, by the desire to find models for the consensus problem in natural phenomena \cite{FeinermanHK17}.

The communication noise studied in this type of problem can be divided into two types: uniform (or unbiased) and non-uniform (or biased). The uniform noise wants to capture errors in communications between agents in real-world scenarios, in which communication noise affects all opinions in the same way. The non-uniform communication noise instead describes the case in which opinions are affected differently from one another.

The authors of \cite{FeinermanHK17} are the first to explicitly focus on the uniform noise model. In detail, they study the broadcast and the majority consensus problem when the opinion set is binary. In their model of noise, every bit in every exchanged message is flipped independently with some probability smaller than $1/2$. As a result, the authors give natural protocols that solve the aforementioned problems efficiently. The work \cite{FraigniaudN19} generalizes the above study to opinion sets of any cardinality. 

As for the non-uniform communication noise case, in \cite{CrucianiMQR21} it is considered the \hmaj dynamics with a binary opinion set $\{\myalpha,\mybeta\}$, with a probability $p$ that any received message is flipped towards a \emph{fixed} preferred opinion \mybeta, while with probability $1 - p$ the former message keeps intact. They suppose there is an initial majority agreeing on $\myalpha$, and they analyze the \emph{time of disruption}, that is the time the initial majority is subverted. They prove there exists a threshold value $p^{\star}=p^{\star}(h)$ such that, if $p < p^{\star}$ the time of disruption is at least polynomial w.h.p., and if $p > p^{\star}$, the time of disruption is constant, w.h.p. Their result holds for any sufficiently dense graph.

The noise feature affecting opinion dynamics, when the underlying communication network is complete, has been shown to be equivalent to a model without noise, in which communities of stubborn agents (i.e.\ they never change opinion) are added to the network \cite{dAmore2022undecidedSI}.
For this reason, we discuss some previous works that consider such a model. 

In \cite{yildiz2013}, the authors show that, in the voter model, the presence of stubborn agents with opposite opinions precludes the convergence to consensus. The work \cite{MukhopadhayayMR20} studies the asynchronous voter rule and the asynchronous majority rule dynamics with Poisson clocks when the opinion set is binary.
The authors use mean-field techniques, and focus on two different scenarios: In the first, some agents have a probability (which depends on their current opinion) not to update when the clock ticks. In the second, there are stubborn agents. In the second case, which directly relates to our work, they show that for the \threemaj dynamics, there are either one or two possible stable equilibria, depending on the sizes of the stubborn communities, which are reached in logarithmic time. If the two sizes are close to each other and not too large, then agreement on both opinions is possible. Otherwise, either no agreement is possible, or the process converges to an agreement towards a single opinion, which is that of the largest stubborn community. The case in which the two stubborn communities have equal size corresponds to the uniform communication noise model. 

We remark, however, that in our work we consider a different setting: First, we consider the synchronous version of the \threemaj dynamics, which cannot be analyzed with the same tools of the asynchronous version. 
Furthermore, the mean-field arguments do not capture several aspects of the process, such as metastability, which in \cite{MukhopadhayayMR20} is shown only through simulations. For example, the actual behavior of the process in the long term is oscillating in a very small interval around the equilibrium values, spending long times in those intervals, and eventually switching between the two. We characterize the width of the oscillation interval and show there is high probability of convergence, providing also a lower bound on the time the process spends in the equilibrium interval.

\subsection*{Structure of the paper}
\cref{sec:preliminaries} contains the preliminaries for the analysis and the result statements. \cref{ssec:victory_majority} contains the victory of majority case, i.e.\ where $p<1/3$ and there is large enough initial bias. \cref{ssec:symmetry_breaking} contains the symmetry breaking result, i.e.\ where $p<1/3$ but we have no condition on the initial bias. In \cref{ssec:victory_noise} we analyze the victory of noise case, i.e.\ where $p>1/3$. Finally, in \cref{sec:dynamics:simulations} we show some simulation results.

\section{Preliminaries}\label{sec:preliminaries}

We consider the complete graph $G = K_n = (V,E)$ of $n$ nodes (the agents), where each node is labelled uniquely with labels in $[n] \coloneqq \{1,\dots, n\}$. 
Each node supports a binary opinion from a set of opinions $\Sigma = \{\myalpha, \mybeta\}$.
The \threemaj dynamics defines a Markov Chain $\{M_t\}_{t \in \nat}$ which is described by the opinion of the nodes at each time step, i.e.\ $M_t = (i_1(t), \dots, i_n(t)) \in \Sigma^n$ for every $t \ge 0$, where $i_j(t)$ is the opinion of node $j$ at time $t$. 
The transition probabilities are characterized iteratively by the majority update rule as follows: given any time $t \ge 0$, let $M_t \in \Sigma^n$ be the state of the process at time $t$. Then, at time $t+1$, each node $u \in V$ samples three agents independently uniformly at random (with replacement) and updates its opinion to the majority one among the sampled neighbor opinions. For the sake of clarity, we remark that when $u$ samples a neighbor node twice, the corresponding opinion counts twice.

\paragraph{Communication noise.}
To this dynamics, we introduce a uniform communication noise feature. Let $0 < \punif < 1$ be a noise probability. When a node pulls a neighbor opinion, with probability $\punif$ the received opinion is sampled u.a.r.\ in $\Sigma$ and instead, with probability $1-\punif$, the former opinion is kept intact as it is received.

\bigskip 

 Since the communication network is complete, knowledge of the graph's size implies that the process's state is fully characterized by the number of nodes supporting a given opinion. Hence, we can write $M_t=(a_t,b_t)$, where $a_t$ is the number of the nodes supporting opinion $\myalpha$ at time $t$, and $b_t$ is the analogous for opinion $\mybeta$. 

To keep track of the progress of the dynamic towards the consensus, we define the bias of the process at time $t$  as the difference between the number of agents supporting the opinion $\mybeta$ and the number of agents supporting $\myalpha$, i.e.\
\begin{align}\label{eq:s_t}
    s_t = b_t-a_t = 2b_t - n,
\end{align}
which takes value in $\{-n, \dots, n\}$.
We notice that, once $n$ is fixed, the process can be fully characterized only by the values of the bias, i.e.\ $\{s_t\}_{t \geq 0}$. We remark that $s_t > 0$ if the majority opinion at time $t$ is $\mybeta$ and $s_t<0$ if it is $\myalpha$.  We say that configurations having bias $s_t \in \{n,-n\} $ are monochromatic, meaning that every node supports the same opinion, while a configuration with $ s_t= 0 $ is symmetric.  We finally remark that the random variable $b_{t}$ (and, analogously, $a_{t}$) is the sum of i.i.d.\ Bernoulli random variables. In detail, if $X^{(t)}_i$ is the r.v.\ yielding 1 if node $i$ adopts opinion \mybeta at round $t+1$, and 0 otherwise, then $b_{t}= \sum_{i \in [n]}X_i^{(t)}$. Therefore, for \eqref{eq:s_t},
\begin{equation}
\label{eq:s_t_sum_independent}
    s_t = 2 \sum_{i \in [n]}X_i^{(t)}-n = \sum_{i \in [n]}(2X_i^{(t)}-1),
\end{equation}
where $(2X_i^{(t)}-1)$ are i.i.d.\ taking values in $\{-1,1\}$. For this reason, we can apply the Hoeffding bound (\cref{lemma:hoeffding}) to the bias.

\paragraph{Further notation.}
For any function $f(n)$, we make use of the standard Landau notation $\mathcal{O} \left( f(n) \right)$,
$ \Omega\left(f(n)\right)$,$\Theta\left(f(n)\right)$. Furthermore, for a constant $c > 0$, we write $\mathcal{O}_c\left(f(n)\right)$,$ \Omega_c\left(f(n)\right)$, and $\Theta_c\left(f(n)\right)$ if the hidden constant in the notation depends on $c$.

\bigskip

We first compute the expectation of the bias at time $t$, conditional on its value at time $t-1$.

\begin{lemma}\label{lem:expectation_bias} 	Let $\{s_t\}_{t \geq 0}$ be the process induced by the \threemaj  dynamics with uniform noise probability $p \in (0,1)$. The conditional expectation of the bias is
\begin{align}
	\expect{s_{t}\mid s_{t-1}=s}=\frac{s(1-p)}{2}\left(3-\frac{s^2}{n^2}(1-p)^2\right).
\end{align}
\end{lemma}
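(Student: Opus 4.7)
The plan is to compute the probability that a generic node $u$ supports opinion $\mybeta$ at time $t$ given the bias $s_{t-1}=s$, then use linearity of expectation via formula \eqref{eq:s_t}. By symmetry of the complete graph with self-loops, every node has the same update probability, so we only have to analyze one arbitrary node.

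First I would compute the probability $q$ that a single pulled opinion (after the noise is applied) equals $\mybeta$. Since $b_{t-1}=(n+s)/2$, a noiseless pull returns $\mybeta$ with probability $(n+s)/(2n)$, and a noisy pull returns $\mybeta$ with probability $1/2$. Noise occurs independently with probability $p$, hence
\begin{align*}
    q = (1-p)\cdot\frac{n+s}{2n} + p\cdot\frac{1}{2} = \frac{1}{2} + \frac{s(1-p)}{2n}.
\end{align*}

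Next I would observe that with exactly three pulls, the sampled multiset always has a strict majority (there can be no tie on a binary opinion set with an odd number of samples), so the ``otherwise'' clause in the update rule never triggers. Therefore node $u$ adopts $\mybeta$ iff at least two of the three (noisy) pulls return $\mybeta$. Since the three pulls are independent, this probability is
\begin{align*}
    \pr{X_i^{(t)}=1 \given s_{t-1}=s} = 3q^2(1-q) + q^3 = 3q^2 - 2q^3.
\end{align*}

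Finally, by \eqref{eq:s_t_sum_independent} and linearity of expectation, $\expect{s_t\mid s_{t-1}=s} = n(6q^2 - 4q^3 - 1)$. Writing $r \coloneqq s(1-p)/(2n)$ so that $q=1/2+r$, a direct expansion gives
\begin{align*}
    6q^2 - 4q^3 - 1 = 6\Bigl(\tfrac{1}{4} + r + r^2\Bigr) - 4\Bigl(\tfrac{1}{8} + \tfrac{3r}{4} + \tfrac{3r^2}{2} + r^3\Bigr) - 1 = 3r - 4r^3,
\end{align*}
where the constant, linear-in-$r$, and quadratic-in-$r$ contributions cancel as indicated. Substituting back $r=s(1-p)/(2n)$ yields
\begin{align*}
    \expect{s_t\mid s_{t-1}=s} = n\Bigl(3r - 4r^3\Bigr) = \frac{3s(1-p)}{2} - \frac{s^3(1-p)^3}{2n^2} = \frac{s(1-p)}{2}\left(3 - \frac{s^2}{n^2}(1-p)^2\right),
\end{align*}
which is the claimed identity. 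There is no real obstacle: the only mildly delicate point is making sure that the binary-opinion $+$ three-samples combination rules out ties so that the update rule reduces to a clean Bernoulli event, after which everything is an elementary algebraic simplification.
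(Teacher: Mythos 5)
Your proposal is correct and follows essentially the same route as the paper: both compute the per-pull probability of receiving $\mybeta$ (your $q$ is exactly the paper's $b'$), express the update probability as the chance of a $\mybeta$-majority among three independent pulls (the paper's $(b')^3+3a'(b')^2$ equals your $3q^2-2q^3$ since $a'=1-b'$), and then apply \eqref{eq:s_t}. The only difference is that you carry out explicitly the algebraic expansion that the paper dismisses as ``simple calculations,'' and you make explicit the (correct) observation that ties cannot occur with three binary samples.
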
 
\begin{proof}
    Let $b = b_t$ and $a = a_t$. Then $s = b - a$ and $n = a + b$, which implies $b=(n+s)/2$ and $a=(n-s)/2$. The probability that, when a node samples a neighbor, it receives opinion \mybeta is $b' = (b/n)\cdot (1-p) + p/2$, where $(b/n)\cdot (1-p)$ is the probability to receive a non-noisy message which contains opinion $\mybeta$, and $p/2$ is the contribution of the noise. Analogously, the probability that it receives opinion $\myalpha$ is  $a' = (a/n)\cdot (1-p) + p/2$.
	Then, the probability the node updates its opinion to $\mybeta$ is $(b')^3 + 3a'(b')^2$.
    For the sake of the calculations, we define \(s' = b' - a' = (s/n)(1-p) \).
    Since \(1 = b' + a'\), we have \(b' = (1 + s')/2\) and \(a' = (1-s')/2\).
    So, \cref{eq:s_t} implies that
	\begin{align*}
    	\expect{s_t \mid s_{t-1}=s} & = 2n\left((b')^3+3a'(b')^2\right)-n   \\
     & = n\left(2 (b')^2(b' + 3a') - 1\right) \\
     & =  \frac n2 \left((1 + s')^2(2 - s') - 2\right) \\
     & =  \frac n2  \left(3s' - (s')^3 \right) \\
     & = \frac{s(1-p)}{2}\left(3-\frac{s^2}{n^2}(1-p)^2\right).
	\end{align*}
\end{proof}

By the lemma above, we deduce that there are three fixed points of the conditional expectation of the bias in the next step. The first one corresponds to $s=0$, and the other (possible) equilibrium correspond to the condition
\[
\frac{1-p}{2}\left(3-\frac{s^2}{n^2}(1-p)^2\right) =1.
\]
The latter condition results in 
\begin{align*}
s & = \pm \frac{n}{(1-p)} \cdot \sqrt{\frac{3(1-p)-2}{(1-p)}} \\
& =  \pm \frac{n}{(1-p)} \cdot \sqrt{\frac{1-3p}{1-p}},
\end{align*}
which is well defined if only if $ p \le 1/3$. We will denote the absolute value of the latter two values by $\biaseq$.

\section{Victory of the majority}\label{ssec:victory_majority}

The aim of this section is to prove the following theorem, which shows how the dynamics solves the majority consensus problem when $\punif < 1/3$ in a weak form, since only an almost-consensus is reached. 

\begin{theorem}[Victory of the majority]
	\label{thm:victory-majority}
	Let $\{s_t\}_{t \geq 0}$ be the process induced by the \threemaj  dynamics with uniform noise probability $p <  1/3$. Let $\varepsilon>0$ be any arbitrarily small constant such that $\varepsilon < 1/3$ and $\varepsilon \leq \frac{2(1-3p)}{3(1-p)}$, and
	let $\gamma > 0$ be any constant. Let $\biaseq = \frac{n}{(1-p)}\sqrt{\frac{1-3p}{1-p}}$. Then, for any starting configuration $ s_0 $ such that $s_0 \ge \gamma \sqrt{n\log n}$ and for any sufficiently large $n$, the following holds w.h.p.:
	\begin{enumerate}[(i)]
	    \item there exists a time $\tau_1 = \mathcal{O}_{\gamma,\varepsilon,p}(\log n)$ such that
    $
	    (1-\varepsilon)\biaseq \leq s_{\tau_1}\leq (1+\varepsilon)\biaseq
    $;
	\item there exists a value $c = {\Theta_{\gamma,\varepsilon,\punif}(1)}$ such that, for all $k\leq n^c$, 
	$
	    (1-\varepsilon)\biaseq \leq s_{\tau_1+k}\leq (1+\varepsilon)\biaseq
	$.
	\end{enumerate}
\end{theorem}

In each of the following statements we assume that $\{s_t\}_{t \geq 0}$ is the bias of the process induced by the \threemaj dynamics with uniform noise probability $p<1/3$.

We first show a lemma which states that, for any small constant $\varepsilon > 0$, whenever $s_{t-1} \not \in [(1-\varepsilon)\biaseq, (1+\varepsilon)\biaseq]$, then $s_{t}$ gets closer to the $\biaseq$. 
\begin{lemma}
	\label{lem:s_t-increase-decrease}
For any constant $0 
 \le \varepsilon \le \frac{2(1-3p)}{3(1-p)}$ and for any $\gamma > 0$, if $s \geq \gamma \sqrt{n \log n}$, the following statements hold
	\begin{enumerate}[(i)]
	    \item if $s \leq (1-\varepsilon)\biaseq$, then
	    $
	        \pr{s_{t}\geq (1+3\varepsilon^2/4)s\mid s_{t-1}=s}\geq 1-\frac{1}{n^{\gamma^2 \varepsilon^4/32}}\,;
	    $
	    \item if, $s \geq (1+\varepsilon)\biaseq$, then
	    $
	        \pr{s_t \leq (1-3\varepsilon^2/4)s \mid s_{t-1}=s}\geq 1-\frac{1}{n^{\gamma^2 \varepsilon^4/32}}\,.
	    $
	\end{enumerate}
\end{lemma}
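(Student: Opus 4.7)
The plan is to combine Lemma~\ref{lem:expectation_bias} (conditional first moment) with Hoeffding's inequality applied to the representation in \eqref{eq:s_t_sum_independent}, which writes $s_t$ as a sum of $n$ independent $\pm 1$ random variables once $s_{t-1}$ is fixed. The natural starting point is to rewrite the expectation of Lemma~\ref{lem:expectation_bias} using the identity $\biaseq^2 = n^2(1-3p)/(1-p)^3$, obtaining
\[
\frac{\expect{s_t \mid s_{t-1} = s}}{s} \;=\; 1 + \frac{1-3p}{2}\!\left(1 - \frac{s^2}{\biaseq^2}\right),
\]
which is $>1$ when $|s|<\biaseq$ and $<1$ when $|s|>\biaseq$, confirming that $\biaseq$ is attractive.

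For item (i), the hypothesis $s \le (1-\varepsilon)\biaseq$ gives $s^2/\biaseq^2 \le (1-\varepsilon)^2$, and hence $1 - s^2/\biaseq^2 \ge 2\varepsilon - \varepsilon^2$. Combining with $\varepsilon<1/3$ and $\varepsilon^2 \le (1-3p)/2$, I would extract a lower bound of the form
\[
\expect{s_t \mid s_{t-1} = s} \;\ge\; \bigl(1+\tfrac{3\varepsilon^2}{4}+\alpha\bigr)\,s
\]
for an explicit $\alpha = \Theta(\varepsilon^2)$; the calibration $\alpha = \varepsilon^2/4$ is what will eventually match the exponent in the failure probability. I would then apply Hoeffding's inequality (\cref{lemma:hoeffding}) to \eqref{eq:s_t_sum_independent}, obtaining
\[
\pr{s_t < \expect{s_t \mid s_{t-1}=s}-\alpha s \mid s_{t-1}=s} \;\le\; \exp\!\left(-\frac{\alpha^{2}s^{2}}{2n}\right).
\]
Plugging in $s \ge \gamma\sqrt{n\log n}$ and $\alpha = \varepsilon^2/4$ bounds the right-hand side by $n^{-\gamma^2\varepsilon^4/32}$, and on the complement the inequality $s_t \ge (1+3\varepsilon^2/4)s$ holds automatically, proving (i).

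For item (ii) the argument is symmetric: $s \ge (1+\varepsilon)\biaseq$ gives $s^2/\biaseq^2 \ge (1+\varepsilon)^2$, so $1 - s^2/\biaseq^2 \le -(2\varepsilon+\varepsilon^2)$, and the drift is now contractive by an analogous amount. A one-sided Hoeffding application in the opposite direction, with the same deviation parameter $\alpha s$, yields $s_t \le (1-3\varepsilon^2/4)s$ with the same failure probability.

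The main technical difficulty is the calibration of constants. The expected drift at the boundary $s=(1\pm\varepsilon)\biaseq$ equals $(1-3p)\varepsilon(1\mp\varepsilon/2)\,s$, which is only linear in $\varepsilon$ times the small factor $1-3p$; one must carefully combine the hypotheses $\varepsilon<1/3$ and $\varepsilon^2 \le (1-3p)/2$ to guarantee that this drift exceeds the target $3\varepsilon^2/4$ by a margin of order $\varepsilon^2$. It is exactly this margin that is consumed by Hoeffding fluctuations at scale $\sqrt{n\log n}$ via the hypothesis $s \ge \gamma\sqrt{n\log n}$, producing the sharp exponent $\gamma^2\varepsilon^4/32$.
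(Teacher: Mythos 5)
Your proposal follows essentially the same route as the paper's proof: \cref{lem:expectation_bias} is used to lower- (resp.\ upper-) bound the conditional drift by $(1+\varepsilon^2)s$ (resp.\ $(1-\varepsilon^2)s$), and a one-sided Hoeffding bound with slack $\varepsilon^2 s/4$ applied to \eqref{eq:s_t_sum_independent} yields exactly the exponent $\gamma^2\varepsilon^4/32$; your normalization $1+\tfrac{1-3p}{2}\left(1-s^2/\biaseq^2\right)$ is just an algebraic repackaging of the paper's inequality \eqref{eq:cond-biaseps}. The calibration difficulty you flag is genuine --- at $s=(1-\varepsilon)\biaseq$ the drift factor $(1-3p)\varepsilon(1-\varepsilon/2)$ exceeds $\varepsilon^2$ only when $\varepsilon$ is at most roughly $1-3p$, which is slightly stronger than the stated $\varepsilon^2<(1-3p)/2$ --- but the paper's own proof relies on the very same step, so this is not a defect of your approach relative to theirs.
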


\begin{proof}
    We first notice that 
	\begin{align}
    	(1-\varepsilon)\biaseq\leq \frac{n}{1-p}\sqrt{\frac{1-3p-2\varepsilon^2}{1-p}}.
    	\label{eq:cond-biaseps}
	\end{align}
    Indeed, 
    \begin{align*}
    1 - 3p -2\varepsilon^2 - (1-\varepsilon)^2(1-3p) & = \varepsilon\left(2(1-3p) - 3\varepsilon(1 - p)\right)
    \end{align*}
    is non-negative if and only if \(0  
 \le \varepsilon \le \frac{2(1-3p)}{3(1-p)}\).

	From \cref{lem:expectation_bias}, if each $s \leq (1-\varepsilon)\biaseq$, then
	\begin{align*}
    \expect{s_{t}\mid s_{t-1}=s} & = \frac{s(1-p)}{2}\left(3-\frac{s^2}{n^2}(1-p)^2\right) \\
    & \geq s \left(\frac{3 - 3p}{2}-\frac{1-3p-2\varepsilon^2}{2}\right) \\
    & = s(1+\varepsilon^2).   
	   \end{align*}
	where the inequality follows by \eqref{eq:cond-biaseps}. By \eqref{eq:s_t_sum_independent} and by the Hoeffding bound (\cref{lemma:hoeffding}), it holds that
	\[
	    \pr{s_t \leq s(1+\varepsilon^2)-s\varepsilon^2/4 \mid s_{t-1}=s}\leq e^{-s^2\varepsilon^4/(32n)}\leq e^{-\gamma^2 \varepsilon^4\log n/32}\leq \frac{1}{n^{\gamma^2 \varepsilon^4/32}}\,.
	\]
	The second inequality in the lemma follows by a symmetric argument, observing that
	\begin{align*}
    	(1+\varepsilon)\biaseq\geq \frac{n}{1-p}\sqrt{\frac{1-3p+2\varepsilon^2}{1-p}},
	\end{align*}
 Indeed, the expression
 \[
    (1+\varepsilon)^2(1-3p) - (1-3p + 2\varepsilon^2) = \varepsilon\left(2(1-3p) - \varepsilon(1 + 3p)\right)
 \]
 is non-negative if and only if $0 \le \varepsilon \le \frac{2(1-3p)}{1+3p}$, which is guaranteed since \( \varepsilon \le \frac{2(1-3p)}{3(1-p)}\). 
\end{proof}

The following lemma serves to bound how far the bias can get from the interval $ [(1+\varepsilon)\biaseq, (1-\varepsilon)\biaseq]$. 
\begin{lemma}
	\label{lem:s_t_increase-not-too-much}
For any constants $\varepsilon>0$ and $\gamma > 0$, if $s \geq \gamma \sqrt{n \log n}$, the following statements hold
	\begin{enumerate}[(i)]
	\item if  $ s \leq (1+\varepsilon)\biaseq$, then $ \pr{s_t \geq (1-\varepsilon - \varepsilon^2)s \mid s_{t-1}=s}\geq 1-\frac{1}{n^{\gamma^2 \varepsilon^2/16}}$;
	\item if $ s \geq (1-\varepsilon)\biaseq$ with $\varepsilon < 1$, then $ \pr{s_{t}\leq
	(1+ \varepsilon)s \mid s_{t-1}=s} \geq 1 -
	\frac{1}{n^{\gamma^2 \varepsilon^2 p^2}}$.
	\end{enumerate}
\end{lemma}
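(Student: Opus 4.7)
The plan is to exploit the closed form of $\expect{s_t \mid s_{t-1}=s}$ from \cref{lem:expectation_bias}, substitute the hypotheses on $s$ using the explicit definition $\biaseq^2 = n^2(1-3p)/(1-p)^3$, and then apply the Hoeffding bound (\cref{lemma:hoeffding}) to the representation \eqref{eq:s_t_sum_independent} of $s_t$ as a sum of i.i.d.\ $\pm 1$ variables. Both parts follow this template; they differ only in which algebraic expansion is used and in where the gap between the conditional mean and the target value comes from.

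For part (i), I would first use $s \leq (1+\varepsilon)\biaseq$ to bound $\frac{s^2}{n^2}(1-p)^2 \leq (1+\varepsilon)^2\frac{1-3p}{1-p}$, and substitute this into \cref{lem:expectation_bias}. Simplifying $\frac{1-p}{2}\bigl(3 - (1+\varepsilon)^2\tfrac{1-3p}{1-p}\bigr)$ cancels the leading constant $1$, leaving
\[
\expect{s_t\mid s_{t-1}=s} \geq s\bigl[1 - (1-3p)(\varepsilon + \varepsilon^2/2)\bigr] \geq s(1 - \varepsilon - \varepsilon^2/2),
\]
since $0 < 1-3p < 1$. The margin to the target $(1-\varepsilon-\varepsilon^2)s$ is therefore at least $s\varepsilon^2/2$. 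Applying Hoeffding to the $n$ terms of \eqref{eq:s_t_sum_independent} and using $s \geq \gamma\sqrt{n\log n}$ gives a failure probability of order $\exp(-s^2\varepsilon^4/(8n)) \leq n^{-\gamma^2\varepsilon^4/8}$, from which the stated bound follows.

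For part (ii), I would proceed symmetrically: $s \geq (1-\varepsilon)\biaseq$ yields $\frac{s^2}{n^2}(1-p)^2 \geq (1-\varepsilon)^2\frac{1-3p}{1-p}$, and the analogous simplification gives
\[
\expect{s_t\mid s_{t-1}=s} \leq s\bigl[1 + (1-3p)(\varepsilon - \varepsilon^2/2)\bigr].
\]
The key observation is that the gap to the target $(1+\varepsilon)s$ is now
\[
(1+\varepsilon)s - \expect{s_t\mid s_{t-1}=s} \geq s\bigl[\varepsilon - (1-3p)\varepsilon + (1-3p)\varepsilon^2/2\bigr] \geq 3p\varepsilon\, s,
\]
so the gap is first-order in $\varepsilon$ (with a $p$ factor), rather than second-order as in part (i). Hoeffding then produces a failure probability of order $\exp(-9 p^2\varepsilon^2 s^2/(2n)) \leq n^{-9\gamma^2 p^2\varepsilon^2/2}$, which yields the stated bound.

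The proof is essentially a computation; the only subtlety is making sure the algebra around $\biaseq$ is tight enough that the two cases give genuinely different concentration exponents. The hardest part will be doing the expansions cleanly so that the $\varepsilon + \varepsilon^2/2$ in part (i) collapses to give a clean $\varepsilon^2/2$ margin (which demands carefully distinguishing $(1-3p)$ from $1$ in the right places), and so that in part (ii) the $\varepsilon$ contributions partially cancel to expose the $3p\varepsilon$ margin that is crucial for the $p^2$-dependence advertised in the statement.
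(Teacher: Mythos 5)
Your proof follows essentially the same route as the paper's: lower/upper-bound the conditional expectation from \cref{lem:expectation_bias} using the hypothesis on $s$ relative to $\biaseq$, identify the margin to the target (of order $\varepsilon^2 s$ in part (i) and $p\varepsilon s$ in part (ii)), and apply the Hoeffding bound to the representation \eqref{eq:s_t_sum_independent} of $s_t$ as a sum of i.i.d.\ $\pm 1$ variables; the intermediate expectation bounds you derive are exactly those in the paper. The only caveat is that in part (i) your Hoeffding step yields a failure probability of order $n^{-\gamma^2\varepsilon^4/8}$, which for small $\varepsilon$ does not literally imply the advertised exponent $\gamma^2\varepsilon^2/16$ --- but the paper's own proof exhibits the identical $\varepsilon^4$-versus-$\varepsilon^2$ discrepancy, so this is a bookkeeping issue in the statement rather than a gap in your argument.
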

\begin{proof}
	The proof is similar to that of the previous lemma. 
	From \cref{lem:expectation_bias}, we get that
	\begin{align*}
    	\expect{s_t \given s_{t-1} = s} \ge s\left(1 - \varepsilon -\frac{\varepsilon^2}{2}\right),
	\end{align*}
	which follows since $s \le (1+\varepsilon)\biaseq$ by simple calculations. By using the Hoeffding bound (\cref{lemma:hoeffding}), we get 
	\[
	    \pr{s_t \le s\left(1 - \varepsilon -\frac{\varepsilon^2}{2}\right) - \frac{\varepsilon^2 \cdot s}{2} \given s_{t-1} = s } \le e^{-\frac{\gamma^2 \varepsilon^4}{16}} = \frac{1}{n^{\frac{\gamma^2 \varepsilon^2}{16}}}.
	\]
	The second claim follows symmetrically from \cref{lem:expectation_bias} by observing that, since $s \ge (1 - \varepsilon)\biaseq$
	\begin{align*}
    	\expect{s_t \given s_{t-1} = s} \le s\left(1 + (1-3p)\varepsilon\right).
	\end{align*}
	The Hoeffding bound implies
	\begin{align*}
	    \pr{s_t \ge s\left(1 + \varepsilon\right) \given s_{t-1} = s} & \le \pr{s_t \ge s\left(1 + (1-3p)\varepsilon\right) + 2p\varepsilon \cdot s \given s_{t-1} = s} \\
     & \le e^{-\gamma^2 \varepsilon^2 p^2}  = \frac{1}{n^{\gamma^2 \varepsilon^2 p^2}}.
	\end{align*}
\end{proof}

We provide another lemma to control the behavior of the bias. 
The proof consists in the application of simple concentration bounds. 
\begin{lemma}\label{lemma:vic_maj_bounds_seq}
    For any constant $k > 0$, the following statements hold:
    \begin{enumerate}[(i)]
        \item if $ s \ge \biaseq$, then $\pr{s_t \ge 2\biaseq/3 \given s_{t-1} = s} \ge 1 - 1/n^k$.
        \item if $ 0 \le s \le 2\biaseq/3$, then $\pr{s_t \le \biaseq \given s_{t-1} = s} \ge 1 - 1/n^k$.
    \end{enumerate}
\end{lemma}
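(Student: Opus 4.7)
The plan is to exploit the monotonicity of $f(s) := \expect{s_t \mid s_{t-1} = s} = \frac{s(1-p)}{2}\bigl(3 - \frac{s^2}{n^2}(1-p)^2\bigr)$ on the whole interval $[0, n]$ and then to apply a one-sided Hoeffding bound via \eqref{eq:s_t_sum_independent}, exactly as in the proofs of \cref{lem:s_t-increase-decrease,lem:s_t_increase-not-too-much}. A short derivative computation gives $f'(s) = \frac{3(1-p)}{2}\bigl(1 - \frac{s^2}{n^2}(1-p)^2\bigr)$, which vanishes only at $s = n/(1-p)$. Since $p \in (0,1/3)$ implies $n/(1-p) > n$, the function $f$ is strictly increasing on $[0, n]$. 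Moreover, plugging $s = \biaseq$ into the defining identity $\biaseq^2(1-p)^2/n^2 = (1-3p)/(1-p)$ yields $f(\biaseq) = \biaseq$, and for constant $p < 1/3$ we have $\biaseq = \Theta(n)$.

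For part (i), monotonicity of $f$ on $[\biaseq, n]$ immediately gives $f(s) \ge f(\biaseq) = \biaseq$ whenever $\biaseq \le s \le n$. Applying Hoeffding's inequality to the sum $s_t = \sum_{i \in [n]}(2X_i^{(t)} - 1)$ with deviation $\biaseq/3$ yields
\[
\pr{s_t \le f(s) - \biaseq/3 \mid s_{t-1} = s} \le \exp\bigl(-\biaseq^2/(18n)\bigr) = \exp(-\Theta(n)),
\]
which is at most $1/n^k$ for any constant $k > 0$ and $n$ sufficiently large. Hence $s_t \ge f(s) - \biaseq/3 \ge 2\biaseq/3$ with the required probability.

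For part (ii), monotonicity again gives $f(s) \le f(2\biaseq/3)$ whenever $0 \le s \le 2\biaseq/3$. A direct simplification using the identity for $\biaseq^2$ yields $f(2\biaseq/3) = \frac{(23 - 15p)}{27}\biaseq \le \frac{23}{27}\biaseq$, so the gap to the target satisfies $\biaseq - f(s) \ge \frac{4}{27}\biaseq = \Theta(n)$. A symmetric Hoeffding application with deviation $\frac{4}{27}\biaseq$ then gives $\pr{s_t \ge \biaseq \mid s_{t-1} = s} \le \exp(-\Theta(n)) \le 1/n^k$.

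I do not expect any substantive obstacle. The only mildly delicate points are the algebraic evaluation $f(2\biaseq/3) = (23 - 15p)\biaseq/27$ (straightforward once one substitutes the identity for $\biaseq^2$), and keeping in mind that although the gaps $\biaseq/3$ and $4\biaseq/27$ are small constants times $\biaseq$, they are $\Theta(n)$, which is what drives the Hoeffding tail below any inverse polynomial in $n$, uniformly in the constant $k$.
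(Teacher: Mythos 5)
Your proposal is correct and follows essentially the same route as the paper: bound the conditional expectation away from the target by a $\Theta(n)$ margin and then apply the Hoeffding bound via \eqref{eq:s_t_sum_independent}, using $\biaseq = \Theta(n)$ to beat any inverse polynomial. The only cosmetic difference is that you derive the expectation bounds from monotonicity of $f$ and exact evaluation at $\biaseq$ and $2\biaseq/3$, whereas the paper uses the cruder estimates $s(1-p) \le \expect{s_t \mid s_{t-1}=s} \le \tfrac{3}{2}s(1-p)$; both yield valid $\Theta(n)$ gaps.
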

\begin{proof}
    Let $k$ be any arbitrarily large constant. 
    As for (i), \cref{lem:expectation_bias} gives that
    \[
        \expect{s_t \given s_{t-1} = s} \ge s(1-p) \ge \biaseq(1-p),
    \]
    since $\biaseq \le s \le n$. Then, let $\delta = (1-3p)/3 > 0$. By using the Hoeffding bound, it holds that
    \[
        \pr{s_{t} \le \biaseq(1 - p) - \delta \cdot \biaseq \given s_{t-1} = s } \le e^{-\frac{\delta^2 \biaseq^2}{4}} \le \frac{1}{n^k},
    \]
    where the latter inequality holds since $\biaseq = \myTheta{n}$ and $\biaseq > (2k/\delta)\log n$ for a sufficiently large $n$.
    As for (ii), \cref{lem:expectation_bias} implies that
    \[
        \expect{s_t \given s_{t-1} = s} \le \frac{3s(1-p)}{2} \le \biaseq(1-p),
    \]
    which is true since $ 0 \le s \le 2\biaseq/3$.
    The Hoeffding bound then gives
    \[
        \pr{s_t \ge \biaseq(1-p) + p\biaseq \given s_{t-1} = s} \le e^{-\frac{p^2 \biaseq^2}{4}} \le \frac{1}{n^{k}},
    \]
    where the latter inequality holds since $\biaseq = \myTheta{n}$ and so $\biaseq > (2k/p)\log n$ for a sufficiently large $n$.
    
\end{proof}

We can piece together the above lemmas, which imply the following corollary.

\begin{corollary}
	\label{cor:key} For any constant $\varepsilon>0$ such that $\varepsilon < \frac{2(1-3p)}{3(1-p)}$,
	\begin{enumerate}[(i)]
		\item If $\abs{\biaseq - s} \le (\varepsilon/4)\biaseq$, then 
		\[
		    \pr{\abs{\biaseq - s_t} \le \varepsilon\biaseq \given s_{t-1}=s}\geq 1-\frac{1}{n^{\gamma^2 \varepsilon^2 p^2 / 2^5}}\,;
		\]
		\item If $(\varepsilon/4)\biaseq \le \abs{\biaseq - s} \le \biaseq / 3$, then 
		\[
		    \pr{\abs{\biaseq - s_t} \le \abs{\biaseq - s}\cdot \left(1 -  \frac{3\varepsilon^2}{2^5}\right)\given s_{t-1} = s} \ge 1 - \frac{1}{n^{\gamma^2 \varepsilon^4 p^2 / (2^{18} 3^2)}}\,.
		\]
	\end{enumerate}
\end{corollary}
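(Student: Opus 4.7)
The plan is to derive both parts by combining Lemmas~\ref{lem:s_t-increase-decrease} and~\ref{lem:s_t_increase-not-too-much}: on the side of $\biaseq$ where the drift naturally pushes $s_t$, I use the ``contraction'' lemma (\ref{lem:s_t-increase-decrease}); on the opposite side, I use the ``does not move too much'' lemma (\ref{lem:s_t_increase-not-too-much}) to prevent overshoot. A union bound over the two one-sided events finishes the argument in each case.

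For (i), the hypothesis $|\biaseq - s| \le (\varepsilon/4)\biaseq$ places $s$ well inside the intervals where both halves of Lemma~\ref{lem:s_t_increase-not-too-much} apply with the single parameter $\varepsilon'=\varepsilon/2$. Part (ii) gives $s_t \le (1+\varepsilon/2)s \le (1+\varepsilon/2)(1+\varepsilon/4)\biaseq \le (1+\varepsilon)\biaseq$, and part (i) gives $s_t \ge (1-\varepsilon/2-\varepsilon^2/4)s \ge (1-\varepsilon/2-\varepsilon^2/4)(1-\varepsilon/4)\biaseq \ge (1-\varepsilon)\biaseq$, the last step being an elementary polynomial inequality valid for all $\varepsilon > 0$. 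The two failure probabilities, $1/n^{\gamma^2\varepsilon^2 p^2/4}$ and $1/n^{\gamma^2\varepsilon^2/64}$, are both below the stated $1/n^{\gamma^2\varepsilon^2 p^2/2^{5}}$ since $p < 1/3$.

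For (ii), by symmetry I only treat $s \le \biaseq$, so $s \in [(2/3)\biaseq,(1-\varepsilon/4)\biaseq]$ and $d \coloneqq \biaseq - s \in [(\varepsilon/4)\biaseq,\biaseq/3]$. I apply Lemma~\ref{lem:s_t-increase-decrease}(i) with parameter $\varepsilon/4$: since $s \le (1-\varepsilon/4)\biaseq$, one gets $s_t \ge (1+3\varepsilon^2/64)s$ w.h.p., hence $\biaseq-s_t \le d - (3\varepsilon^2/64)s$. The key arithmetic step is that $d \le \biaseq/3$ is equivalent to $s \ge 2d$, which upgrades the right-hand side to $d(1-3\varepsilon^2/32)$. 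For the bound on the opposite side of $\biaseq$, I apply Lemma~\ref{lem:s_t_increase-not-too-much}(ii) with the $s$-dependent parameter $\varepsilon' \coloneqq d/\biaseq \in [\varepsilon/4,1/3]$---for which the hypothesis $s \ge (1-\varepsilon')\biaseq$ holds with equality---yielding $s_t \le (1+\varepsilon')s = (1-(\varepsilon')^2)\biaseq \le \biaseq$, so $s_t - \biaseq \le 0$ trivially satisfies the claim. The two failure-probability exponents, $\gamma^2(\varepsilon/4)^4/32$ and $\gamma^2(\varepsilon/4)^2 p^2$, both exceed $\gamma^2\varepsilon^4 p^2/(2^{18}\cdot 9)$ for $\varepsilon,p \in (0,1)$, so a union bound closes the argument. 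The symmetric case $s \ge \biaseq$ is handled identically, using Lemma~\ref{lem:s_t-increase-decrease}(ii) for the upper bound on $s_t$ and Lemma~\ref{lem:s_t_increase-not-too-much}(i) (with parameter $\varepsilon' = d/\biaseq$) for the lower bound.

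The one delicate point of the proof is the conversion of the deterministic drift $(3\varepsilon^2/64)s$ produced by Lemma~\ref{lem:s_t-increase-decrease} into the required $(3\varepsilon^2/32)d$-contraction on $|\biaseq - s_t|$: this reduces to the inequality $s \ge 2d$, i.e., $d \le \biaseq/3$, which is precisely the upper cut-off on $d$ imposed in the hypothesis of (ii). The complementary ``wrong-side'' bound, by contrast, is essentially free once the parameter of Lemma~\ref{lem:s_t_increase-not-too-much} is chosen as the current normalised distance $d/\biaseq$.
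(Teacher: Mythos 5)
Your proof is correct and follows essentially the same route as the paper's: split according to which side of $\biaseq$ the bias lies on, use \cref{lem:s_t-increase-decrease} for the contraction towards $\biaseq$ and \cref{lem:s_t_increase-not-too-much} to rule out overshooting, with your identity $d\le\biaseq/3 \iff s\ge 2d$ doing exactly the same work as the paper's bound $(1-\delta)/\delta\ge 2$. The only loose end is the symmetric case $s\ge\biaseq$, where the ``wrong-side'' control is not literally $\biaseq-s_t\le 0$ but needs the one-line check $\biaseq-(1+\delta)(1-\delta-\delta^2)\biaseq=(2\delta+\delta^2)\,d\le \tfrac{7}{9}\,d\le d\left(1-\tfrac{3\varepsilon^2}{2^5}\right)$, which is precisely the computation the paper carries out.
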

\begin{proof}
    First, we prove (i). By using \cref{lem:s_t_increase-not-too-much} and the union bound, we have that
    \[
        \pr{\left(1 - \frac{\varepsilon}{4} - \frac{\varepsilon^2}{16}\right)\cdot\left(1 - \frac{\varepsilon}{4}\right)\biaseq \le s_t \le \left(1 + \frac{\varepsilon}{4}\right)\cdot\left(1 + \frac{\varepsilon}{4}\right)\biaseq \given s_{t-1} = s} \ge 1 - \frac{1}{n^{\gamma^2 \varepsilon^2 p^2 / 2^5}}\,.
    \]
    The claim follows by osberving that
    \[
        \left[\left(1 - \frac{\varepsilon}{4} - \frac{\varepsilon^2}{16}\right)\cdot\left(1 - \frac{\varepsilon}{4}\right)\biaseq , \left(1 + \frac{\varepsilon}{4}\right)\cdot\left(1 + \frac{\varepsilon}{4}\right)\biaseq\right] \subseteq \left[(1 - \varepsilon)\biaseq, (1+\varepsilon)\biaseq\right] \,.
    \]
	
	As for claim (ii), we divide the proof in two different cases.
    Suppose, first, that $2\biaseq /3 \le s \le (1 - \varepsilon/4)\biaseq$. A constant $\varepsilon/4 \le \delta \le 1/3$ exists such that $s = (1- \delta)\biaseq$. Then, from
    \cref{lem:s_t-increase-decrease,lem:s_t_increase-not-too-much}, we have that
    \[
        \pr{(1 - \delta)\left(1 + \frac{3\varepsilon^2}{2^6}\right) \biaseq \le s_t \le \biaseq \given s_{t-1} = s} \ge 1 - \frac{1}{n^{\gamma^2 \varepsilon^4 p^2/ 2^{14}}}\,.
    \]
    Notice that 
    \begin{align*}
        & \abs{\biaseq - (1 - \delta)\left(1 + \frac{3\varepsilon^2}{2^6}\right)\biaseq} = \biaseq - (1 - \delta)\left(1 + \frac{3\varepsilon^2}{2^6}\right) \biaseq\\
        = \ & \left(\biaseq - (1-\delta)\biaseq\right) \cdot \left(1 - \frac{(1-\delta)\cdot \frac{3\varepsilon^2}{2^6} \cdot \biaseq}{\delta \cdot \biaseq}\right)
        \le  (\biaseq - s) \cdot \left(1 - \frac{3\varepsilon^2}{2^5}\right)\,,
    \end{align*}
    where in the last inequality we used that $\delta \le 1/3$.
    Hence, 
    \begin{align}\label{eq:towards_eq_1}
        \pr{\abs{\biaseq - s_t} \le \abs{\biaseq - s}\cdot \left(1 - \frac{3\varepsilon^2}{2^5}\right) \given s_{t-1} = s} \ge 1 - \frac{1}{n^{\gamma^2 p^2 \varepsilon^4 / 2^{14}}}\,.
    \end{align}
    Second, suppose $(1 + \varepsilon/4)\biaseq \le s \le 3\biaseq/2$. A constant $\varepsilon/4 \le \delta \le 1/3$ exists such that $s = (1 + \delta)\biaseq$.
    From \cref{lem:s_t-increase-decrease,lem:s_t_increase-not-too-much}, it holds that
    \[
        \pr{(1+\delta)\left(1 - \delta - \delta^2\right)\biaseq \le s_t \le (1+\delta)\left(1 - \frac{3\varepsilon^2}{2^6}\right) \biaseq \given s_{t-1} = s} \ge 1 - \frac{1}{n^{\gamma^2 \varepsilon^4 p^2 / (2^{18} 3^2)}}\,,
    \]
    where we used the union bound and the fact that $\delta \le 1/3$.
    Notice that
    \begin{align*}
        \abs{\biaseq -(1+\delta)\left(1 - \delta - \delta^2\right)\biaseq } = \ & \biaseq - (1+\delta)\left(1 - \delta - \delta^2\right)\biaseq \\
        = \ & \left( (1+\delta)\biaseq - \biaseq\right)\cdot\left(\frac{(1+\delta)(\delta+\delta^2)\biaseq}{\delta \biaseq} - 1 \right) \\
        \le \ & \left( (1+\delta)\biaseq - \biaseq\right)\cdot\left(\frac{16}{9} - 1 \right) \\
        = \ & \left( (1+\delta)\biaseq - \biaseq\right)\cdot\left(1 - \frac{2}{9} \right)\,,
    \end{align*}
    where the inequality holds since $\delta \le 1/3$.
    By simple calculations, it can be seen that $(1+\delta)\left(1 - \frac{3\varepsilon^2}{2^6}\right) \ge 1$. Then, we have also that
    \begin{align*}
    \abs{\biaseq - (1+\delta)\left(1 - \frac{3\varepsilon^2}{2^6}\right) \biaseq} = \ & (1+\delta)\left(1 - \frac{3\varepsilon^2}{2^6}\right) \biaseq - \biaseq \\ 
        = \ & \left( (1+\delta)\biaseq - \biaseq\right)\cdot\left[1 - \frac{(1+\delta)\cdot \frac{3\varepsilon^2}{2^6}\biaseq}{\delta\biaseq}  \right] \\
        \stackrel{(a)}{\le} \ & \left( (1+\delta)\biaseq - \biaseq\right)\cdot\left[1 - 3\left(1+\frac{\varepsilon}{4}\right)\cdot \frac{3\varepsilon^2}{2^6}  \right] \\
        \stackrel{(b)}{\le} \ & \left( (1+\delta)\biaseq - \biaseq\right)\cdot\left[1 -  \frac{9\varepsilon^2}{2^6}  \right]\,,
    \end{align*}
    where (a) holds since $\varepsilon/4 \le \delta \le 1/3$, and (b) holds since $\varepsilon > 0$.
    Thus,
    \begin{align}\label{eq:towards_eq_2}
        \pr{\abs{\biaseq - s_t} \le \abs{\biaseq - s}\cdot \left[1 -  \frac{9\varepsilon^2}{2^6}\right]\given s_{t-1} = s} \ge 1 - \frac{1}{n^{\gamma^2 \varepsilon^4 p^2 / (2^{18} 3^2)}}\,.
    \end{align}
    Combining \cref{eq:towards_eq_1,eq:towards_eq_2}, we get that, whenever $(\varepsilon/4)\biaseq \le \abs{\biaseq - s} \le \biaseq / 3$, then 
    \begin{align*}
        \pr{\abs{\biaseq - s_t} \le \abs{\biaseq - s}\cdot \left[1 -  \frac{3\varepsilon^2}{2^5}\right]\given s_{t-1} = s} \ge 1 - \frac{1}{n^{\gamma^2 \varepsilon^4 p^2 / (2^{18} 3^2)}}\,.
    \end{align*}
\end{proof}

We are finally ready to prove the theorem.

\begin{proof}[Proof of Theorem \ref{thm:victory-majority}]

We divide the proof in different cases.
First, suppose that $(\varepsilon/4)\biaseq \le \abs{\biaseq - s} \le \varepsilon\biaseq$. Let $T_1 = n^{\gamma^2 \varepsilon^4 p^2 / (2^{19}3^2) }$. Then, from \cref{cor:key}.(i) and (ii), and by using the chain rule, we have that 
\[
    \pr{\bigcap_{k = 1}^{T_1} \left\{\abs{\biaseq - s_{t + k}} \le \varepsilon\biaseq\right\} \given s_{t} = s} \ge 1 - \frac{1}{n^{\gamma^2 \varepsilon^4 p^2 / (2^{20}3^2) }}.
\]
This proves statement (ii) of the theorem.

Second, suppose that $\varepsilon\biaseq \le \abs{\biaseq - s} \le \biaseq/3$. Then, from \cref{cor:key}.(ii) and by using the chain rule, a time $T_2$ exists, such that 
\[
T_2 = \mybigo{- \frac{\log n}{\log \left( 1 - \frac{3\varepsilon^2}{2^5}\right)}}  = \mybigo{\log n / \varepsilon^2}
\]
such that 
\[
    \pr{\abs{\biaseq - s_{t + T_2}} \le \varepsilon\biaseq \given s_t = s} \ge  1 - \frac{1}{n^{\gamma^2 \varepsilon^4 p^2 / (2^{20}3^2) }}.
\]

Third, suppose that $s \le 2\biaseq / 3$. From \cref{lem:s_t-increase-decrease}.(i) and \cref{lemma:vic_maj_bounds_seq}.(ii), by the chain rule and the union bound, there is a time 
    \[
        T_3 = \mybigo{\frac{\log n}{\log \left(1 + \frac{3\varepsilon^2}{4}\right)}} = \mybigo{\log n / \varepsilon^2}
    \]
    such that 
    \[
        \pr{2\biaseq/3 \le s_{t + T_3} \le \biaseq \given s_t = s} \ge 1 - \frac{1}{n^{\gamma^2 \varepsilon^4 / 2^6}}\,.
    \]
Then, we are in one of the first two cases, and we conclude by using the chain rule.

Fourth, suppose that $s \ge (1 + \frac{1}{3})\biaseq $. From \cref{lem:s_t-increase-decrease}.(ii) and \cref{lemma:vic_maj_bounds_seq}.(i), and by using the chain rule, a time $T_4$ exists, with $T_4 = \mybigo{\log n}$, such that 
\[
    \pr{\abs{\biaseq - s_{T_4}} \le \biaseq/3 \given s_{t} = s} \ge 1 - \frac{1}{n^{\gamma^2/(3^4 2^6)}}\,.
\]
Statement (i) of the theorem follows by setting $\tau_1 = {T_2 + T_3 + T_4}$.
\end{proof}

\section{Symmetry breaking}\label{ssec:symmetry_breaking}
The following theorem shows how the dynamics quickly break the initial symmetry. Combining this result with \cref{thm:victory-majority}, it shows that the consensus problem is solved. 

\begin{theorem}[Symmetry breaking]\label{thm:symbreak} 
Let $\{s_t\}_{t \geq 0}$ be the process induced by the \threemaj  dynamics with uniform noise probability $p <  1/3$, and let $\gamma > 0$ be any constant. Then, for any starting configuration $s_0$ such that $|s_0| \leq \gamma \sqrt{n \log n}$ and for any sufficiently large $n$, w.h.p. there exists a time $\tau_2=\mathcal{O}_{\gamma,\punif}(\log n)$ such that $|s_{\tau_2}|\geq \gamma \sqrt{n \log n}$.
\end{theorem}

In each of the following statements, we assume that $\{s_t\}_{t \geq 0}$ is the process induced by the \threemaj dynamics with uniform noise probability $p<1/3$. The symmetry breaking analysis essentially relies on the following lemma, which has been proved in  \cite{clementi2018}.

\begin{lemma}\label{lemma:symbreak-generic}
	Let $\{X_{t}\}_{t\in \nat}$ be a Markov Chain with finite-state space $\Omega$ and let $f:\Omega\mapsto[0,n]$ be a function that maps states to integer values. Let $c_3$ be any positive constant and let $m = c_3\sqrt{n}\log n$ be a target value. Assume the following properties hold:
	\begin{enumerate}[(i)]
		\item for any positive constant $h$, a positive constant $c_1 < 1$ (which depends only on $h$) exists, such that for any $x \in \Omega : f(x) < m$,
		\[
		\pr{f(X_t) < h\sqrt{n} \given X_{t-1} = x} < c_1;
		\]
		\item there exist two positive constants $\delta$ and $c_2$ such that for any $x \in \Omega: h\sqrt{n} \leq f(x) < m$,
		\[
		\pr{f(X_t) < (1+\delta)f(X_{t-1}) \given X_{t-1} = x} < e^{-c_2f(x)^2/n}.
		\]
	\end{enumerate}
	Then the process reaches a state $x$ such that $f(x) \ge m$ within $\mathcal{O}_{c_2,\delta,c_3}(\log n)$ rounds with probability at least $1 - 2/n$.
\end{lemma}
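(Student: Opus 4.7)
The plan is a two-phase analysis: Phase 1 uses hypothesis (i) to drive $f(X_t)$ above the launching-pad level $h\sqrt{n}$, while Phase 2 uses (ii) to amplify $f(X_t)$ geometrically up to the target $m$; a repeated-attempts argument via the strong Markov property then converts per-attempt constant success probability into the claimed $1-2/n$ bound.

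I would start by choosing the constant $h$ large enough (as a function of $c_2$ and $\delta$) so that the geometric tail $\sum_{k\ge 0} e^{-c_2(1+\delta)^{2k}h^2}$ is at most, say, $1/4$. Let $c_1 = c_1(h) < 1$ be the constant supplied by (i) for this $h$, and set $K = \lceil \log_{1+\delta}(m/(h\sqrt{n})) \rceil = \mybigo{\log \log n}$, the number of multiplicative amplification steps required to cross from $h\sqrt{n}$ to $m$.

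For Phase 2, starting from any $y$ with $f(y) \ge h\sqrt{n}$, I would iterate (ii) and union-bound over $k=0,\dots,K-1$: the event that $f$ fails to grow by a factor $(1+\delta)$ at some step has probability at most $\sum_k e^{-c_2(1+\delta)^{2k}h^2} \le 1/4$, and on the complement $f(X_{t+K}) \ge (1+\delta)^K h\sqrt{n} \ge m$. The crucial feature is that per-step failure probabilities drop doubly-exponentially in $k$, so the sum is dominated by its first term $e^{-c_2 h^2}$ and made small by the choice of $h$. Phase 1 is immediate from (i): from any state with $f<m$, the next step satisfies $f \ge h\sqrt{n}$ with probability at least $1-c_1$. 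Packaging these into ``attempts'' of length $K+1$ (one Phase-1 step followed by $K$ amplification steps) and using the strong Markov property, each attempt, conditional on not yet having reached $m$, succeeds with probability at least $q = (1-c_1)/4 > 0$, independently of earlier attempts. Running $N = \mybigo{\log n}$ consecutive disjoint attempts then bounds the total failure probability by $(1-q)^N \le 1/n$.

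The main obstacle is the sharpness of the round bound: the naive packaging above uses $N(K+1) = \mybigo{\log n \cdot \log \log n}$ rounds rather than the claimed $\mybigo{\log n}$. To recover the tight bound I would argue by amortization, setting $G = \{t \le T : f(X_t) \ge h\sqrt{n}\}$ and using (i) together with \cref{lemma:chernoff:multiplicative} to obtain $|G| = \Omega(T)$ w.h.p.\ on the failure event $\{\tau > T\}$; selecting a maximal $K$-separated subset $G' \subseteq G$ of size $\Omega(|G|/K)$ and invoking the strong Markov property, the probability that no amplification attempt initiated at a time in $G'$ reaches $m$ is at most $(1/4)^{|G'|}$. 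Pushing past the residual $\log \log n$ factor is the delicate part; I expect it can be handled by noting that once a single amplification step succeeds the subsequent per-step failure probabilities become $n^{-\omega(1)}$, effectively allowing each ``attempt'' to be charged $\mybigo{1}$ rounds amortized against the rounds in $G$.
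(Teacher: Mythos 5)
Your two-phase skeleton (launch above $h\sqrt{n}$ via hypothesis (i), then amplify via hypothesis (ii)) is sound and correctly locates the crux, but as written the argument does not prove the lemma: it proves a weaker statement, and the patch you sketch for the stronger one rests on a false claim. The attempt-packaging gives $\mybigo{\log n \cdot \log\log n}$ rounds, as you concede. The amortized variant does not close the gap: with $T = \mybigo{\log n}$ and $K = \Theta(\log\log n)$, your $K$-separated set $G'$ has only $\mybigo{\log n / \log\log n}$ elements, so the failure bound $(1/4)^{|G'|} = e^{-\Omega(\log n/\log\log n)}$ is superpolynomially larger than the required $\mybigo{1/n}$. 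The rescue you then invoke --- that after a single successful amplification step the per-step failure probabilities become $n^{-\omega(1)}$ --- is simply not true: an attempt starts at $f \approx h\sqrt{n}$ with $h$ a \emph{constant}, so after one success hypothesis (ii) only bounds the next failure probability by $e^{-c_2(1+\delta)^2 h^2}$, a constant independent of $n$; the bound $e^{-c_2 f^2/n}$ only becomes $n^{-\omega(1)}$ once $f = \omega(\sqrt{n\log n})$, i.e., after $\Theta(\log\log n)$ consecutive successes. Hence the $\mybigo{1}$ amortized charge per attempt is unjustified. (A repair along your lines is possible --- failed excursions have length with doubly-exponentially decaying tails, so the total length of $\mybigo{\log n}$ failed excursions is $\mybigo{\log n}$ w.h.p.\ by a Bernstein-type bound --- but that concentration argument is exactly what is missing from your proposal.)

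The paper avoids the consecutive-success bottleneck with a different key idea, which is what your argument lacks: it never asks for $K$ successes in a row. It restricts the chain to the ``good'' rounds $\{t : f(X_t) \ge h\sqrt{n}\}$ (by hypothesis (i) and a Chernoff bound, a constant fraction of any $c_4 \log n$ rounds are good, w.h.p.), obtaining an induced Markov chain $R_i$, and then runs a drift argument on the exponential potential $Y_i = \exp\left(m/\sqrt{n} - f(R_i)/\sqrt{n}\right)$. Choosing $h$ large enough (as you also do) so that $e^{z - c_2 z^2} + e^{-\delta z} \le 2e^{-2}$ for all $z \ge h$, hypothesis (ii) yields $\expect{Y_{i+1} \given R_i} \le Y_i/e$: a failed step merely resets the potential to at most its maximum $e^{m/\sqrt{n}}$, and this loss is absorbed into the expectation rather than into a union bound over steps. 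Markov's inequality after $(c_3+1)\log n$ good rounds gives failure probability at most $1/n$, and the Chernoff bound on the density of good rounds contributes the second $1/n$. This is how the $\mybigo{\log n}$ round bound and the $1-2/n$ guarantee are achieved simultaneously; replacing your union bound over consecutive successes with such an in-expectation drift argument is the missing ingredient.
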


Our goal is to apply the above lemma to the \threemaj process, which defines a Markov chain. In particular, we claim the hypothesis of \cref{lemma:symbreak-generic} are satisfied when the bias of the system is $ o\left(\sqrt{n\log n}\right) $, with $ f(\config) = \biastime{\config} $, $ m = \gamma \sqrt{n} \log n $ for any constant $\gamma > 0$. Then, \cref{lemma:symbreak-generic} implies the process reaches a configuration with bias greater than $ \myOmega{\sqrt{n\log n}} $ within time $ \mybigo{\log n} $, w.h.p. We need to prove that the two hypotheses hold.

\begin{lemma}\label{lemma:symbreak-bias-satisfies-generic}
For any constant $c_3>0$, let $s$ be a value such that $\abs{s} <c_3 \sqrt{n}\log n$. Then, 
	\begin{enumerate}[(i)]
		\item For any constant $h > 0$, there exists a positive constant $ c_1 < 1 $ depending only on $h$, such that 
		\[
			\pr{s_{t} < h \sqrt{n} \given s_{t-1} =s } < c_1;
		\]
		\item Two positive constants $ \delta, c_2$ exist (depending only on $p$), such that if $ |s| \ge h\sqrt{n}  $, then
		\[
			\pr{s_{t} < (1+ \delta)s \given s_{t-1}=s} < e^{-\frac{c_2 s^2}{n}}.
		\]
	\end{enumerate}
\end{lemma}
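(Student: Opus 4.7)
I plan to verify hypotheses (i) and (ii) of the generic drift lemma \cref{lemma:symbreak-generic} applied to the \threemaj process with $f(\config) = \abs{s(\config)}$. By the obvious symmetry of the process under the opinion swap $\myalpha \leftrightarrow \mybeta$, the law of $\abs{s_t}$ given $s_{t-1} = s$ agrees with its law given $s_{t-1} = -s$, so throughout the argument I may (and will) assume WLOG $s \ge 0$, and it suffices to produce the required lower bounds on $s_t$ itself (since $\abs{s_t} \ge s_t$).

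Claim (ii) should follow from a direct Hoeffding estimate. By \cref{lem:expectation_bias}, when $0 \le s \le c_3 \sqrt{n}\log n$,
\begin{align*}
    \expect{s_t \given s_{t-1} = s} = s\cdot \frac{1-p}{2}\left(3 - \frac{s^2(1-p)^2}{n^2}\right) = s \cdot \frac{3(1-p)}{2}\bigl(1 - o(1)\bigr),
\end{align*}
since the correction term is $\mathcal{O}(\log^2 n/ n) = o(1)$. Because $p<1/3$, the factor $\tfrac{3(1-p)}{2}$ is a constant strictly greater than $1$; choosing for instance $\delta = (1-3p)/8$, one gets $\expect{s_t \given s_{t-1}=s} \ge (1+2\delta)s$ for $n$ large enough. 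Writing $s_t$ as the sum of $n$ i.i.d.\ variables in $\{-1,1\}$ via \eqref{eq:s_t_sum_independent} and applying the Hoeffding bound \cref{lemma:hoeffding} to the inclusion $\{s_t < (1+\delta)s\} \subseteq \{s_t - \expect{s_t \given s_{t-1}=s} < -\delta s\}$ then yields the desired bound $\exp\!\bigl(-\delta^2 s^2/(2n)\bigr)$, so (ii) holds with $c_2 = \delta^2/2$.

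For claim (i), I would split into two regimes. When $h\sqrt{n} \le s \le c_3\sqrt{n}\log n$, the estimate from (ii) already forces $s_t \ge (1+\delta)s \ge h\sqrt{n}$ except on an event of probability at most $\exp(-\delta^2 h^2/2)$, which is strictly less than $1$. When $0 \le s < h\sqrt{n}$, the expectation $\expect{s_t \given s_{t-1}=s}$ is only $\mathcal{O}(\sqrt{n})$, and one needs a genuine anti-concentration argument: here each summand of $s_t$ is a $\pm 1$ variable whose variance is bounded below by a positive constant depending only on $p$ (since the per-node probability of adopting $\mybeta$ always lies in $[p/2, 1-p/2]$), so the total variance is $\Theta(n)$, and invoking the reverse Chernoff bound \cref{lemma:chernoff-reverse} (or equivalently a Berry--Esseen normal approximation) gives $\pr{s_t \ge h\sqrt{n} \given s_{t-1}=s} \ge c_0(h,p) > 0$. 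Setting $c_1 = \max\{1 - c_0,\, \exp(-\delta^2 h^2 /2)\} < 1$ then yields (i).

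The main obstacle will be the anti-concentration step in the low-$s$ regime of (i): standard upper-tail concentration is inapplicable, and one must produce a two-sided lower bound on the probability that the sum $s_t$ deviates from its (nearly zero) expectation by a $\Theta(\sqrt{n})$ amount. The reverse Chernoff bound listed in \cref{app:tools} is tailored precisely for this purpose, so once invoked, together with the uniform variance lower bound coming from $p \in (0,1)$, the rest of the argument is a routine combination with (ii).
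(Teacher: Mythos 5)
Your proposal is correct and matches the paper's proof in all essentials: claim (ii) is the same expectation-plus-Hoeffding computation (the paper takes $\delta = 3\varepsilon/4$ and $c_2 = \varepsilon^2/32$), and claim (i) rests, exactly as in the paper, on anti-concentration of $s_t$ viewed as a sum of $n$ i.i.d.\ $\pm 1$ summands. The only organizational difference is in (i): the paper asserts a stochastic-domination inequality $\pr{|s_{t}| < h\sqrt{n}\given s_{t-1} = s} \le \pr{|s_{t}| < h\sqrt{n}\given s_{t-1} = 0}$ and then applies Berry--Esseen only at $s=0$, whereas your two-regime split (the drift/Hoeffding bound when $s \ge h\sqrt{n}$, and anti-concentration with a per-summand variance lower bound when $s < h\sqrt{n}$) reaches the same constant $c_1<1$ while sidestepping that unproved domination step.
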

\begin{proof}
	As for the first claim, a simple domination argument implies that 
	\begin{equation}
	    \label{eq:simbreak_prob1}	\pr{|s_{t}| < h\sqrt{n}\given s_{t-1} = s} \le \pr{|s_{t}| < h\sqrt{n}\given s_{t-1} = 0}.
	\end{equation}
	As shown in \cref{sec:preliminaries}, $s_{t}$ is a sum of $n$ i.i.d.\ Rademacher r.v.s with zero mean and unitary variance. We can hence make use of the \cref{lemma:berry-eseen} (Berry-Essen inequality). In particular, let $\Phi(x)$ be the cumulative function of a standard normal distribution. A constant $C > 0$ exists such that
	\begin{align*}
		\abs{\pr{s_{t} \le h\sqrt{n} \given s_{t-1} = 0} - \Phi(h)} \le \frac{C}{\sqrt{n}}.
	\end{align*}
	Since $ \Phi(h) = c $ for some constant $c>0$ which depends only on $h$, we have that 
	\begin{align*}
		c - \frac{C}{\sqrt{n}} \le \pr{s_{t} \le h\sqrt{n} \given s_{t-1} = 0} \le c + \frac{C}{\sqrt{n}}.
	\end{align*}
	Since $ \pr{|s_{t}| < h \sqrt{n} \given s_{t-1} = 0} \le \pr{s_{t} \le h\sqrt{n} \given s_{t-1} = 0 }$, for $n$ large enough we get \[ \pr{|s_{t}| < h\sqrt{n} \given s_{t-1} = 0} < 2c .\] By setting $ c_1 = c/2 $, we get claim (i) from \cref{eq:simbreak_prob1}.
	
	As for the second claim, assume $s > 0$ and $h\sqrt{n}\leq s \leq h\sqrt{n} \log n$. By Lemma \ref{lem:expectation_bias} and the fact that $h\sqrt{n} \le s \le h\sqrt{n} \log n \le (1-\sqrt{\varepsilon})s_{eq}$, we have (as in Lemma \ref{lem:s_t-increase-decrease})
	\begin{align*}
	\expect{s_{t}\mid s_{t-1}=s} & = \frac{s(1-2p)}{2}\left(3-\frac{s^2}{n^2}(1-2p)^2\right) \\
 & \geq s \left(\frac{3}{2}-3p-\frac{1-6p-2\varepsilon}{2}\right) \\
 & = s(1+\varepsilon).
	\end{align*}
	By the Hoeffding bound (\cref{lemma:hoeffding}), we get that
	\begin{align*}
		\pr{s_{t} \le s \left(1+\varepsilon\right)-s\varepsilon/4 \given s_{t-1} = s} 		& \le e^{-s^2 \varepsilon^2/(32n)} .
	\end{align*}
	Observe that $ \pr{|s_{t}| \le s \left(1+3\varepsilon/4\right) \given s_{t-1} = s} \le \pr{s_{t} \le s \left(1+3\varepsilon/4\right) \given s_{t-1}=s} $. Thus, we have the claim by setting $ \delta = 3\varepsilon/4  $ and $ c_2 = \varepsilon^2 / 32 $. 
\end{proof}

The symmetry breaking is then a simple consequence of the above Lemma. 
\begin{proof}[Proof of Theorem \ref{thm:symbreak}]
    	Apply \cref{lemma:symbreak-bias-satisfies-generic,lemma:symbreak-generic} with $ h = c_3 = \gamma$.
\end{proof}

\section{Victory of noise}\label{ssec:victory_noise}

The following theorem shows that no form of consensus is possible when $\punif > 1/3$.

\begin{theorem}[Victory of noise]\label{thm:victory_noise} Let $\{s_t\}_{t \geq 0}$ be the process induced by the \threemaj  dynamics with uniform noise probability $p > 1/3$. Let $\varepsilon>0$ be any arbitrarily small constant such that $ \varepsilon < \min \{1/4, (1-p), (3p-1)/2\}$ and let $\gamma>0$ be any positive constant. Then, for any starting configuration $s_0$ such that $|s_0| \geq \gamma \sqrt{n \log n}$ and for any sufficiently large $n$, the following holds w.h.p.:
\begin{enumerate}[(i)]
    \item there exists a time $\tau_3=\mathcal{O}_{\varepsilon,p}(\log n)$ such that $s_{\tau_3}=\mathcal{O}_{\varepsilon}(\sqrt{n})$ and, moreover, the majority opinion switches at the next round  with probability $\Theta_{\varepsilon}(1)$;
    \item there exists a value $c={\Theta_{\gamma,\varepsilon}(1)}$ such that, for all $k \leq n^c$, it holds that $|s_{\tau_3+k}|\leq \gamma \sqrt{n \log n}$.
\end{enumerate}
\end{theorem}

In each of the following statements, we assume that $\{s_t\}_{t \geq 0}$ is the process induced by the \threemaj dynamics with uniform noise probability $p>1/3$. 

We apply tools from drift analysis (\cref{lemma:drift_analysis}) to the absolute value of the bias of the process, showing that it reaches magnitude $\mybigo{\sqrt{n}}$ quickly. Then, since the standard deviation of the bias is $\myTheta{\sqrt{n}}$, we have  that the majority opinion switches with constant probability (\cref{lemma:vic_noise_maj_switches}). Finally, with \cref{lemma:3maj-highp-biasdecreases}, we show that the bias keeps bounded in absolute value by $\mybigo{\sqrt{n\log n}}$.

\begin{lemma}\label{lemma:vicnoise_unsignedbias}
For any constant $\varepsilon>0$ such that $\varepsilon <(1-p)$, if $s \ge 2 \sqrt{n} / \left(\varepsilon^2\right)$, the following holds 
\[
    \expect{\abs{s_t} \given s_{t-1} = s} \le \expect{s_t \given s_{t-1} = s} \cdot \left( 1 + \frac{\varepsilon}{2}\right)\,.
\]
\end{lemma}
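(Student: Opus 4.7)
The plan is to reduce the lemma to bounding the negative part of $s_t$ via the identity
\[
\expect{\abs{s_t} \given s_{t-1} = s} = \expect{s_t \given s_{t-1} = s} + 2\,\expect{(-s_t)\,\mathbb{1}\{s_t < 0\} \given s_{t-1} = s},
\]
so that, writing $\mu \coloneqq \expect{s_t \given s_{t-1} = s}$, the claim becomes $2\,\expect{(-s_t)\,\mathbb{1}\{s_t < 0\}\given s_{t-1}=s} \le (\varepsilon/2)\,\mu$. The main obstacle I foresee is precisely in how the negative-part expectation is bounded: a crude $\abs{s_t} \le n$ combined with a Hoeffding tail bound $\pr{s_t<0}\le e^{-\mu^2/(2n)}$ leaves a prefactor $n$ that the exponential cannot absorb when $\mu$ is only of order $\sqrt{n}$, so a sharper layer-cake integration is required.

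First, I will lower-bound $\mu$ using \cref{lem:expectation_bias}. Since $s\le n$ and $1-p<1$, the factor $3-(s(1-p)/n)^2$ is at least $2$, which immediately gives $\mu \ge s(1-p)$. Combining with the two hypotheses $\varepsilon < 1-p$ and $s \ge 2\sqrt{n}/\varepsilon^2$ yields
\[
\mu \;\ge\; s(1-p) \;\ge\; s\,\varepsilon \;\ge\; \frac{2\sqrt{n}}{\varepsilon}, \qquad\text{equivalently}\qquad \frac{\mu^2}{2n} \;\ge\; \frac{2}{\varepsilon^2}.
\]

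Next, by \cref{eq:s_t_sum_independent}, $s_t$ is a sum of $n$ i.i.d.\ $\{-1,1\}$-valued random variables with mean $\mu$, so the Hoeffding bound (\cref{lemma:hoeffding}) gives $\pr{s_t < -u \given s_{t-1}=s} \le \exp(-(u+\mu)^2/(2n))$ for every $u \ge 0$. The layer-cake identity then yields
\[
\expect{(-s_t)\,\mathbb{1}\{s_t<0\}\given s_{t-1}=s} \;=\; \int_0^\infty \pr{s_t<-u\given s_{t-1}=s}\,\diff u \;\le\; \int_\mu^\infty e^{-v^2/(2n)}\,\diff v \;\le\; \frac{n}{\mu}\,e^{-\mu^2/(2n)},
\]
where the last inequality is the standard Mills-ratio-type estimate (multiply the integrand by $v/\mu\ge 1$ and integrate exactly). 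It remains to verify $2(n/\mu)e^{-\mu^2/(2n)} \le (\varepsilon/2)\mu$, which rearranges to $4n/(\varepsilon\mu^2) \le e^{\mu^2/(2n)}$. By the bounds on $\mu$ above, the left-hand side is at most $\varepsilon$, while the right-hand side is at least $e^{2/\varepsilon^2} > 1 > \varepsilon$ (using $\varepsilon < 1-p < 1$, which holds since $p > 1/3$). This closes the proof, and crucially it is the layer-cake refinement — reducing the polynomial-in-$n$ prefactor from $n$ to $n/\mu = O(\sqrt{n})$ — that matches the order of $\mu$ on the right and makes the final chain of inequalities hold.
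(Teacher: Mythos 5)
Your proof is correct, but it takes a genuinely different route from the paper's. The paper bounds $\expect{\abs{s_t}\given s_{t-1}=s}$ by the triangle inequality around the conditional mean, $\abs{s_t}\le\abs{s_t-\mu}+\mu$, then applies Jensen's inequality to get $\expect{\abs{s_t-\mu}}\le\sigma(s_t)\le\sqrt{n}$ (the variance of a sum of $n$ independent $\pm1$ variables), and concludes since $\mu\ge s(1-p)\ge 2\sqrt{n}/\varepsilon$ — the same lower bound on $\mu$ you derive. You instead use the exact decomposition $\abs{s_t}=s_t+2(s_t)^-$, reduce the claim to bounding the negative-part expectation, and control it by a layer-cake integral of the Hoeffding tail together with a Mills-ratio estimate, yielding an additive error of order $(n/\mu)e^{-\mu^2/(2n)}$ instead of $\sqrt{n}$. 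All your steps check out: the identity, the lower bound $\mu\ge s(1-p)$ from \cref{lem:expectation_bias}, the tail bound $\pr{s_t<-u}\le e^{-(u+\mu)^2/(2n)}$ from \cref{eq:s_t_sum_independent}, and the final comparison $4n/(\varepsilon\mu^2)\le\varepsilon<1\le e^{\mu^2/(2n)}$. The trade-off is that the paper's argument is shorter and avoids any tail integration, while yours proves a substantially stronger (exponentially small rather than $O(\sqrt{n}/\mu)$) bound on the gap $\expect{\abs{s_t}}-\expect{s_t}$, strength that the lemma does not actually require. Two cosmetic remarks: the inequality $1-p<1$ follows from $p>0$, not specifically from $p>1/3$; and \cref{lemma:hoeffding} is stated in the appendix for $0<a<b$, though the paper itself applies it to $\{-1,1\}$-valued summands throughout, so this affects your argument no more than it affects theirs.
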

\begin{proof}
    It holds that
    \begin{align*}
        \abs{s_t} \le \abs{s_t - \expect{s_t \given s_{t-1} = s}} + \abs{ \expect{s_t \given s_{t-1} = s}} \,.
    \end{align*}
    Furthermore, from \cref{lem:expectation_bias}, it follows  that $\expect{s_t \given s_{t-1} = s}  \ge 0$ as long as $s \ge 0$. By writing  
    \[
        \abs{s_t - \expect{s_t \given s_{t-1} = s}} = \sqrt{\left( s_t - \expect{s_t \given s_{t-1} = s}\right)^2},
    \]
    and by using the Jensen's inequality, it follows that
    \begin{align}\label{eq:vic_noise_abs_bias}
        \expect{\abs{s_t} \given s_{t-1} = s}  & \le \sqrt{\expect{\left(s_t - \expect{s_t \given s_{t-1} = s}\right)^2 \given s_{t-1} = s}} + \expect{s_t \given s_{t-1} = s} \nonumber\\
        & = \sigma\left(\abs{s_t} \given s_{t-1} = s\right) + \expect{s_t \given s_{t-1} = s},
    \end{align}
    where $\sigma(x)$ represents the standard deviation of a r.v.\ $x$. As pointed out in \cref{sec:preliminaries}, the bias can be written as the sum of i.i.d.\ random variables $Y_{i}^{(t)}$ taking values in $\{-1, + 1\}$. For such sum of variables, the variance is linear:
    \begin{align*}
        \sigma\left(\abs{s_t} \given s_{t-1} = s\right)^2 = \sum_{i = 1}^n \sigma\left(Y_i^{(t)} \given s_{t-1} = s \right)^2 \le n\,,
    \end{align*}
    where the latter inequality holds since $\sigma\left(Y_i^{(t)} \given s_{t-1} = s \right)^2 \le 1$ for every $i$.
    From \cref{lem:expectation_bias}, we deduce that
    \[
        \expect{s_t \given s_{t-1} = s} \ge \frac{s(1-p)(3-(1-p)^2)}{2} \ge s(1-p).
    \]
    Since $s \ge \frac{2\sqrt{n}}{\varepsilon^2} \ge \frac{2\sqrt{n}}{\varepsilon(1-p)}$, we get that $\expect{s_t \given s_{t-1} = s} \ge \frac{2\sqrt{n}}{\varepsilon}$. 
    Combining the latter facts with \eqref{eq:vic_noise_abs_bias}, we obtain
    \begin{align*}
        \expect{\abs{s_t} \given s_{t-1}} & \le \expect{s_t \given s_{t-1} = s}\cdot \left( 1 +\frac{\sigma\left(\abs{s_t} \given s_{t-1} = s\right)}{\expect{s_t \given s_{t-1} = s}}\right)  \\
        & \le \expect{s_t \given s_{t-1} = s}\cdot \left( 1 +\frac{\sqrt{n}}{\frac{2\sqrt{n}}{\varepsilon}}\right) \\
        & \le \expect{s_t \given s_{t-1} = s}\cdot \left( 1 +\frac{\varepsilon}{2}\right).
    \end{align*}
\end{proof}

With next lemma, we show that the absolute value of the process quickly becomes of magnitude $\mybigo{\sqrt{n}}$.
\begin{lemma}\label{lemma:vicnoise_martingale} For any constant $\varepsilon>0$ such that $\varepsilon < \min \{(1-p), (3p-1)/2\}$ we define $s_{\text{min}} =  \sqrt{n} / \varepsilon^2$. Then, for any starting configuration $s_0$ such that $s_0 \ge s_{\text{min}}$, with probability at least $1-1/n$ there exists a time $\tau = \mathcal{O}_\varepsilon(\log n)$ such that $\abs{s_{\tau}} \le s_{\text{min}} $.
\end{lemma}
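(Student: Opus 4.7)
The plan is to apply a multiplicative drift argument (a standard super-martingale / drift analysis tool, Lemma~\ref{lemma:drift_analysis}) to the non-negative process $\{|s_t|\}_{t\ge 0}$, using $|s_t|$ itself as the potential. The overall goal is to show that $|s_t|$ contracts in expectation by a factor bounded away from $1$, so that the hitting time of the region below $s_{\min}$ is logarithmic in $n$ with high probability.

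First, I would combine the two ingredients already at hand. By Lemma~\ref{lem:expectation_bias}, for every $s\ge 0$,
\[
\expect{s_t \mid s_{t-1} = s} \;\le\; \frac{3(1-p)}{2}\cdot s,
\]
since the bracket $\bigl(3 - (s/n)^2(1-p)^2\bigr)$ in Lemma~\ref{lem:expectation_bias} is at most $3$. Chaining this with Lemma~\ref{lemma:vicnoise_unsignedbias} (applicable precisely because $s \ge 2 s_{\min} = 2\sqrt{n}/\varepsilon^2$) yields
\[
\expect{\,|s_t|\, \mid s_{t-1} = s\,} \;\le\; \Bigl(1+\tfrac{\varepsilon}{2}\Bigr)\cdot\frac{3(1-p)}{2}\cdot s \;=:\; \lambda\, s.
\]
The assumptions $p > 1/3$ and $\varepsilon < (3p-1)/2$ together force $\lambda < 1$: a short calculation shows $\varepsilon < (3p-1)/2 \le 2(3p-1)/(3(1-p))$, which is exactly the condition $\lambda < 1$. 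Set $\delta := 1 - \lambda = \Theta_{\varepsilon,p}(1) > 0$. By the invariance of the \threemaj rule under swapping the two opinions, the same bound holds whenever $s_{t-1} \le -2 s_{\min}$, so
\[
\expect{\,|s_t|\,\mid s_{t-1}\,} \;\le\; (1 - \delta)\,|s_{t-1}|\qquad \text{whenever } |s_{t-1}| \ge 2 s_{\min}.
\]

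Next, I would invoke the multiplicative drift theorem (Lemma~\ref{lemma:drift_analysis}) with potential $Y_t := |s_t|$, initial value $Y_0 = s_0 \le n$, and target threshold $s_{\min}$, obtaining a hitting time $\tau = \mybigo{\log(n/s_{\min})/\delta} = \mathcal{O}_\varepsilon(\log n)$ such that $Y_\tau \le s_{\min}$ with probability at least $1 - 1/n$, exactly as claimed.

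The main obstacle I foresee is the small mismatch between the target threshold $s_{\min}$ appearing in the conclusion and the threshold $2 s_{\min}$ below which the drift estimate of Lemma~\ref{lemma:vicnoise_unsignedbias} is no longer guaranteed. I would close this constant gap by a mild rescaling: either by reapplying Lemma~\ref{lemma:vicnoise_unsignedbias} with a slightly smaller constant $\varepsilon' < \varepsilon$ (so that the drift regime extends all the way down to $s_{\min}$), or by arguing that once $|s_t| \le 2 s_{\min}$, a single additional step drives $|s_t|$ below $s_{\min}$ with constant probability (via a Hoeffding tail bound around the strictly contracted conditional mean), which can then be boosted to $1 - 1/n$ by iterating $\mybigo{\log n}$ times. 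The potential concern about the bias changing sign is painless: the full symmetry of the process under $s \mapsto -s$ makes $|s_t|$ behave identically on either side of zero, so the drift estimate holds uniformly in $|s_{t-1}|$.
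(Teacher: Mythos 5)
Your proposal follows essentially the same route as the paper: it chains \cref{lem:expectation_bias} with \cref{lemma:vicnoise_unsignedbias} to obtain a multiplicative contraction of $\abs{s_t}$ in expectation (with the same condition $\varepsilon \le (3p-1)/2$ making the contraction factor strictly less than $1$), and then invokes the drift-analysis tool of \cref{lemma:drift_analysis} with a linear potential to get an $\mathcal{O}_\varepsilon(\log n)$ hitting time with probability $1-1/n$. The only difference is that you explicitly flag and repair the factor-of-two mismatch between the threshold $2\sqrt{n}/\varepsilon^2$ required by \cref{lemma:vicnoise_unsignedbias} and the target $s_{\text{min}}=\sqrt{n}/\varepsilon^2$ (via rescaling $\varepsilon$), a point the paper's own proof glosses over.
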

\begin{proof}
    Let $h(x) = \frac{\varepsilon \cdot x}{2}$ be a function.
    Let $X_t = \abs{s_t}$ if $s_t \ge s_{\text{min}}$, otherwise $X_t = 0$. We now estimate $\expect{X_t - X_{t - 1}\given X_{t-1} \ge s_{\text{min}}, \mathcal{F}_{t-1}}$, where $\mathcal{F}_t$ is the natural filtration of the process $X_t$. We have that
    \begin{align*}
        & \expect{X_t - X_{t-1} \given X_{t-1} \ge s_{\text{min}}, \mathcal{F}_{t-1}} \\
        = \ & \expect{X_t \given X_{t-1} \ge s_{\text{min}}, \mathcal{F}_{t-1}} - X_{t-1} \\
        \stackrel{(a)}{\le} \ & \expect{\abs{s_t} \given s_{t-1} \ge s_{\text{min}}, \mathcal{F}_{t-1}} - s_{t-1}         \\
        \stackrel{(b)}{\le} \ & \expect{s_t \given s_{t-1} \ge s_{\text{min}}, \mathcal{F}_{t-1}} \cdot \left(1 + \frac{\varepsilon}{2}\right) - s_{t-1} \\
        \stackrel{(c)}{\le} \ & s_{t-1}(1-\varepsilon)\left(1 + \frac{\varepsilon}{2}\right) -s_{t-1}  \\
        \le \ & -\frac{\varepsilon \cdot s_{t-1}}{2},
    \end{align*}
    where (a) holds because $X_t \le \abs{s_t}$, (b) follows from \cref{lemma:vicnoise_unsignedbias}, and (c) from \cref{lem:expectation_bias}.
    Thus,
    \[
        \expect{X_{t-1} - X_{t} \given X_{t-1} \ge s_{\text{min}}, \mathcal{F}_{t-1}} \ge h\left(X_{t-1}\right)\,.
    \]
    Since $h'(x) = \varepsilon/2 > 0$, we can apply \cref{lemma:drift_analysis}.(iii).
    Let $\tau$ be the first time $X_t =0$ or, equivalently, $\abs{s_t} < s_{\text{min}}$. Then
    \begin{align*}
        \pr{\tau > t \given s_0} & < \exp \left[ -\frac{\varepsilon}{2} \cdot \left(t - \frac{2}{\varepsilon} - \int_{s_{\text{min}}}^{s_0} \frac{2}{\varepsilon \cdot y} \ \diff y \right) \right] \\
        & \le \exp \left[ -\frac{\varepsilon}{2} \cdot \left(t - \frac{2}{\varepsilon} - \int_{s_{\text{min}}}^{n} \frac{2}{\varepsilon \cdot y} \ \diff y \right) \right] \\
        & = \exp \left[ -\frac{\varepsilon}{2} \cdot \left(t - \frac{2}{\varepsilon} - \frac{2}{\varepsilon}(\log n - \log s_{\text{min}})  \right) \right] \\
        & = \exp \left[ -\frac{\varepsilon}{2} \cdot \left(t - \frac{2}{\varepsilon} - \frac{2}{\varepsilon}\left((\log n)/2 + 2\log \varepsilon \right)  \right) \right]  \\&\le \exp \left[ -\frac{\varepsilon \cdot t}{2} + 1 +\frac{\log n}{2}  \right] \,.
    \end{align*}
    If $t = 4(\log n)/\varepsilon$, then we get that 
    $\pr{\tau > t \given s_0} < e^{-3(\log n)/2 + 1} < 1/n$.
\end{proof}

Next lemma states that, whenever the absolute value of the bias is of order of $\mybigo{\sqrt{n}}$, then the majority opinion switches at the next round with constant probability.

\begin{lemma}\label{lemma:vic_noise_maj_switches}
    For any constant $\varepsilon>0$ such that $\varepsilon < 1/4$, and let $s_{t-1}$ be a configuration such that $\abs{s_{t-1}} = s \le \sqrt{n} / \varepsilon$. Then, the majority opinion switches at the next round with constant probability.
\end{lemma}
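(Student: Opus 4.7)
Without loss of generality assume $s_{t-1} = s \ge 0$ (the other case is symmetric), so that the current majority is $\mybeta$ and the majority switches at the next round iff $s_t < 0$. The plan is to apply the Berry--Esseen inequality (\cref{lemma:berry-eseen}) to $s_t$ conditional on $s_{t-1}=s$: the goal is to show that the conditional mean of $s_t$ is only $\mybigo{\sqrt{n}/\varepsilon}$, while its standard deviation is $\Theta(\sqrt{n})$, so that crossing zero requires deviating from the mean by only a constant (in $\varepsilon$) number of standard deviations, which happens with constant probability.

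First, I would upper bound the mean. By \cref{lem:expectation_bias},
\[
\mu \coloneqq \expect{s_t\mid s_{t-1}=s} = \frac{s(1-p)}{2}\left(3-\frac{s^2}{n^2}(1-p)^2\right) \le \frac{3s(1-p)}{2} \le \frac{3(1-p)}{2\varepsilon}\sqrt{n},
\]
using $s\le \sqrt{n}/\varepsilon$. Next, I would estimate the variance using the representation \eqref{eq:s_t_sum_independent}: conditional on $s_{t-1}=s$, the random variables $Y_i \coloneqq 2X_i^{(t)}-1\in\{-1,+1\}$ are i.i.d.\ with success probability $q = (b')^3 + 3a'(b')^2$ where $a' = a/n\cdot(1-p)+p/2$ and $b' = b/n\cdot(1-p)+p/2$. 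Since $|s|\le \sqrt{n}/\varepsilon$ gives $a/n, b/n = 1/2 \pm \mybigo{1/(\varepsilon\sqrt{n})}$, a direct calculation yields $q = 1/2 \pm \mybigo{1/(\varepsilon\sqrt{n})}$, whence $\mathrm{Var}(Y_i) = 4q(1-q) = 1 - \mybigo{1/(\varepsilon^2 n)}$. Summing, $\sigma^2 \coloneqq \mathrm{Var}(s_t\mid s_{t-1}=s) = n - \mybigo{1/\varepsilon^2} = (1-o(1))n$, and in particular $\sigma \ge \sqrt{n}/2$ for $n$ large enough.

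Now I would apply Berry--Esseen. The summands $Y_i$ are bounded by $1$, so the third absolute central moments are bounded by $1$; the standard form of \cref{lemma:berry-eseen} yields an absolute constant $C>0$ such that for every $x\in\real$,
\[
\left\lvert \pr{\frac{s_t-\mu}{\sigma} \le x \given s_{t-1}=s} - \Phi(x) \right\rvert \le \frac{C}{\sqrt{n}}.
\]
Taking $x = -\mu/\sigma$ and using the bounds above, $\mu/\sigma \le 3(1-p)/\varepsilon$, which is a constant depending only on $\varepsilon$ and $p$. Therefore
\[
\pr{s_t < 0\given s_{t-1}=s} \ge \Phi(-\mu/\sigma) - \frac{C}{\sqrt{n}} \ge \Phi\!\left(-\tfrac{3(1-p)}{\varepsilon}\right) - \frac{C}{\sqrt{n}} = \Theta_\varepsilon(1)
\]
for $n$ large enough, which is the desired claim.

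The only delicate point is the lower bound $\sigma^2 = (1-o(1))n$: once $s = \mybigo{\sqrt{n}/\varepsilon}$ guarantees that $q$ is $1/2 \pm o(1)$, the variance of each Rademacher-like summand is bounded away from $0$, and everything else is a routine application of the Berry--Esseen inequality already used in \cref{lemma:symbreak-bias-satisfies-generic}.
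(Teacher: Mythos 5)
Your proof is correct, but it takes a genuinely different route from the paper's. You work with the signed bias $s_t$ directly: you bound its conditional mean by $\mybigo{\sqrt{n}/\varepsilon}$ via \cref{lem:expectation_bias}, lower-bound its conditional standard deviation by $(1-o(1))\sqrt{n}$ (which requires the small but necessary check that the per-node success probability $q$ stays near $1/2$), and then invoke Berry--Esseen to conclude that $s_t$ crosses zero with probability $\Theta_\varepsilon(1)$. The paper instead looks at the minority count $a_t$: since $\expect{a_t \given s_{t-1}=s} \le n/2$, it applies the reverse Chernoff bound (\cref{lemma:chernoff-reverse}) with deviation parameter $\delta = 1/(\varepsilon\sqrt{n})$ (the hypothesis $\delta^2\mu \ge 3$ holds because $\varepsilon < 1/4$) to get $\pr{a_t \ge n/2 + \sqrt{n}/(2\varepsilon)} \ge e^{-9/(2\varepsilon^2)}$, which forces $a_t > b_t$. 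Your approach matches the heuristic the paper itself states before the lemma (``the standard deviation of the bias is $\Theta(\sqrt{n})$''), gives an explicit Gaussian-type constant $\Phi(-3(1-p)/\varepsilon)$, and reuses machinery already deployed in \cref{lemma:symbreak-bias-satisfies-generic}; the cost is the extra variance lower bound and care with the non-centered summands. The paper's argument is shorter and avoids any variance computation, at the price of invoking an anti-concentration inequality whose stated hypotheses (constant $\delta$) it stretches slightly. Two cosmetic points in your writeup: the third absolute central moment of a $\{-1,+1\}$-valued summand is bounded by $8$, not $1$ (harmless, it only affects $C$), and you should pass from $\pr{s_t \le 0}$ to $\pr{s_t < 0}$ by noting $\pr{s_t = 0} = \mybigo{1/\sqrt{n}}$; neither affects the conclusion.
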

\begin{proof}
    Without loss of generality, we assume that $s_{t-1} > 0$. Now, $s_{t-1} = b_{t-1} - a_{t-1}$, with $n/2 < b_{t-1} \le n/2 + \sqrt{n}/(2\varepsilon)$ and $n/2 - \sqrt{n}/(2\varepsilon) \le  a_{t-1}  < n/2 $. Both $b_{t-1}$ and $a_{t-1}$ can be expressed as the sum of i.i.d.\ Bernoulli r.v.s.
    Since $\expect{a_t \given n/2 - \sqrt{n}/(2\varepsilon) \le  a_{t-1}  < n/2} \le n/2$, we have 
    \begin{align*}
        \pr{a_t \ge \frac{n}{2} + \frac{\sqrt{n}}{2\varepsilon}  \given s_{t-1} = s } & = \pr{a_t \ge \frac{n}{2} \cdot \left(1 + \frac{1}{\varepsilon\sqrt{n}}\right) \given s_{t-1} = s}  \ge e^{-\frac{9}{2\varepsilon^2}},
    \end{align*}
    where the latter inequality holds by using the reverse Chernoff bound (\cref{lemma:chernoff-reverse}), whose hypothesis is satisfied since $\varepsilon < 1/4$.
    Thus, there is at least constant probability that the majority opinion switches.
\end{proof}

Next lemma shows that the signed bias decreases each round. 
\begin{lemma}\label{lemma:3maj-highp-biasdecreases} For any constant $\varepsilon>0$ such that $\varepsilon \leq (3p-1)/2$, the following statements hold
	\begin{enumerate}[(i)]
		\item if $ s \ge \frac{\gamma}{2}\sqrt{n\log n} $, then $\pr{s_{t}\leq (1-3\varepsilon/4)s \mid s_{t-1} = s}\geq 1-\frac{1}{n^{\gamma^2\varepsilon^2/2^7}}\,;$
		\item if $s\geq 0$, then $\pr{-\frac{\gamma}{2}\sqrt{n \log n} \leq s_t \leq s+\frac{\gamma}{2}\sqrt{n \log n} \mid s_{t-1}=s} \geq 1-\frac{2}{n^{\gamma^2/8}}.$
	\end{enumerate}
\end{lemma}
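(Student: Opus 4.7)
The proof rests on combining the exact formula $\expect{s_t \mid s_{t-1} = s} = \frac{s(1-p)}{2}\bigl(3 - \frac{s^2}{n^2}(1-p)^2\bigr)$ from \cref{lem:expectation_bias} with the Hoeffding bound, exploiting the representation $s_t = \sum_{i\in[n]}(2X_i^{(t)} - 1)$ as a sum of $n$ independent random variables in $\{-1,1\}$. Both parts are one-sided concentration arguments around a mean that the hypothesis $p > 1/3$ makes strictly contractive.

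For part (i), I would first drop the non-positive correction term and bound $\expect{s_t \mid s_{t-1}=s} \le \frac{3s(1-p)}{2}$. The assumption $\varepsilon \le (3p-1)/2$ rewrites as $\frac{3(1-p)}{2} \le 1 - \varepsilon$, so the conditional mean is at most $s(1-\varepsilon)$. Applying the one-sided Hoeffding bound (\cref{lemma:hoeffding}) with deviation $s\varepsilon/4$ gives
\[
\pr{s_t \ge s(1-\varepsilon) + \tfrac{s\varepsilon}{4} \given s_{t-1}=s} \le \exp\!\left(-\tfrac{s^2 \varepsilon^2}{32 n}\right).
\]
The hypothesis $s \ge \tfrac{\gamma}{2}\sqrt{n\log n}$ makes the right-hand side at most $n^{-\gamma^2 \varepsilon^2/2^7}$, and since $s(1-\varepsilon) + s\varepsilon/4 = s(1 - 3\varepsilon/4)$, this yields the claim.

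For part (ii), I would use a union bound over the two tail events. The upper tail $\{s_t > s + \tfrac{\gamma}{2}\sqrt{n\log n}\}$ again uses the mean bound $\expect{s_t \mid s_{t-1} = s} \le s(1-\varepsilon) \le s$ (valid since $s \ge 0$); Hoeffding with deviation $t = \tfrac{\gamma}{2}\sqrt{n\log n}$ then gives probability at most $\exp(-t^2/(2n)) = n^{-\gamma^2/8}$. For the lower tail $\{s_t < -\tfrac{\gamma}{2}\sqrt{n\log n}\}$, I observe that the factor $3 - \frac{s^2}{n^2}(1-p)^2$ is always at least $3 - (1-p)^2 > 0$, so $\expect{s_t \mid s_{t-1}=s} \ge 0$ whenever $s \ge 0$; the same Hoeffding deviation gives the matching bound $n^{-\gamma^2/8}$, and the union bound produces the prefactor $2$ in the final probability.

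I do not anticipate a serious obstacle here: both parts reduce to one-sided Hoeffding applications once the hypothesis $\varepsilon \le (3p-1)/2$ is converted into the deterministic contraction $\expect{s_t\mid s_{t-1}=s}\le s(1-\varepsilon)$. The only book-keeping to be careful about is tracking constants so that the exponent in (i) comes out to $\gamma^2\varepsilon^2/2^7$: it arises by combining the factor $\gamma^2/4$ in $s^2 \ge (\gamma^2/4)\, n\log n$ with the factor $\varepsilon^2/16$ from squaring the deviation $s\varepsilon/4$, and dividing by $2n$ in the standard Hoeffding exponent, giving the promised denominator $128 = 2^7$.
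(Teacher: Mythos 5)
Your proof is correct and follows essentially the same route as the paper's: both parts bound the conditional mean via \cref{lem:expectation_bias} using $\varepsilon \le (3p-1)/2$ to get the contraction $\expect{s_t \mid s_{t-1}=s}\le (1-\varepsilon)s$, then apply the one-sided Hoeffding bound (with a union bound over the two tails in part (ii)), and your constant-tracking for the exponent $\gamma^2\varepsilon^2/2^7$ matches the paper's computation exactly.
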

\begin{proof}
From \cref{lem:expectation_bias}, for each $s \geq 0$ it holds that
\begin{align}
    \expect{s_t \mid s_{t-1}=s} \leq  \frac{3s(1-p)}{2} \leq (1-\varepsilon)s,
    \label{exp:victory_noise1}
\end{align}
where the second inequality is true since $\varepsilon \leq (3p-1)/2$. We now apply the Hoeffding bound (\cref{lemma:hoeffding}) to $s_t$:
\begin{align*}
    \pr{s_t \geq (1-\varepsilon)s+\varepsilon\cdot s/4} \leq e^{-s^2\varepsilon^2/(32n)} \leq e^{-\gamma^2\varepsilon^2\log n/2^7} \leq \frac{1}{n^\frac{\gamma^2 \varepsilon^2}{2^7}}.
\end{align*}
As for the second claim, we notice that, from \cref{exp:victory_noise1}, $\expect{s_t \mid s_{t-1}=s}\leq s$. The Hoeffding bound (\cref{lemma:hoeffding}) now implies that
\begin{align*}
    \pr{s_t \geq s+\frac{\gamma}{2}\sqrt{n \log n}} \leq e^{-\gamma^2\log n/8}\leq \frac{1}{n^{\gamma^2/8}}.
    \label{eq:claim_2_part1_noise}
\end{align*}
Moreover, from  \cref{lem:expectation_bias}, for any $0 \le s \le n$, 
$
    \expect{s_t \mid s_{t-1} = s} \geq 0
$.
Applying again the Hoeffding bound, we get that
\begin{align*}
    \pr{s_t \geq -\frac{\gamma}{2}\sqrt{n \log n}\mid s_{t-1}=s} \leq e^{-\gamma^2 \log n/8} \leq \frac{1}{n^{\gamma^2/8}},
\end{align*}
By the union bound, we get the second claim.
\end{proof}

We are ready to prove \cref{thm:victory_noise}.
\begin{proof}[Proof of  \cref{thm:victory_noise}]
    Claim (i) follows directly from \cref{lemma:vicnoise_martingale,lemma:vic_noise_maj_switches}. 
    As for claim (ii), whenever the bias at some round $t = \tau + k$ becomes $\abs{s_{t}} \ge (\gamma/2)\sqrt{n\log n}$, from \cref{lemma:3maj-highp-biasdecreases}.(ii) (and its symmetric statement), we have that $\abs{s_t} \le \gamma\sqrt{n\log n}$ with probability $ 1 - 2/n^{\frac{\gamma^2}{8}}$. Then, from \cref{lemma:3maj-highp-biasdecreases}.(i) it follows that the bias starts decreasing each round with probability $1 - 1/n^{\gamma^2 \varepsilon^2 / 2^7}$ until reaching $(\gamma/2) \sqrt{\log n}$. This phase in which the absolute value of the bias keeps bounded by $\abs{\gamma\sqrt{n\log n}}$ lasts for at least $n^{\gamma^2 \varepsilon^2 / 2^8}$ with probability at least $1 - 1/(2n^{\gamma^2 \varepsilon^2 / 2^8})$ by using the chain rule.
\end{proof}

\section{Experiments}\label{sec:dynamics:simulations}
In this section, we describe the experiments we conducted on the \threedyn to show, in practice, the behavior of the bias for different input sizes and noise parameters. 

In \cref{fig:experiments:vic-majority:-erdos} we show the average convergence time of the \threedyn to almost-consensus in s and also Erdös-Rényi graphs \(G_{n,q}\) (with \(q = 1.25 \cdot \ln(n)/n\) to ensure connectivity with high probability).
Interestingly, even though our theorems give asymptotic bounds, the convergence time behaves as expected even for small input sizes (e.g.\ \(n = 2^{10}\) nodes). 
Furthermore, the asymptotic convergence time to almost-consensus seems to be logarithmic even in the Erdös-Rényi graph.

We also test the dynamics against expanders that have low degree: random regular graphs with degrees \(d \in \{3,5\}\) (\cref{fig:experiments:vic-majority:rand-reg}).
It seems that the convergence time is still logarithmic but quantitatively very different from the case of s: e.g.\ in \cref{fig:experiments:vic-majority:rand-reg:b} the convergence time for the random regular graph of degree \(d = 5\) with \(
2^{16}
\) nodes is around \(700\), whereas for s and Erdös-Rényi graphs it is between \(
60
\) and \(80\) (\cref{fig:experiments:vic-majority:-erdos:b,fig:experiments:vic-majority:-erdos:d}).
Notably, we also found out that the dynamics over the random regular graph with \(d = 3\) doesn't seem to converge when the noise parameter is any value \(p \ge 1/5\) (\cref{fig:experiments:diff-noise:rand-reg:d}). 
This phenomenon  suggests that the noise values determining the phase-transition can depend on the expansion and/or sparsity of the underlying graph:
these values seem to be reduced whenever the expansion of the graph decreases and its sparsity increases, as noticeable in \cref{fig:experiments:diff-noise}.
For future work, it would be interesting to study the relation between expansion/sparsity of the graph and the behavior of the \threedyn.

\begin{figure}[ht]
    \centering
    \begin{subfigure}{0.4\textwidth}
        \centering
        \includegraphics[scale=.35]{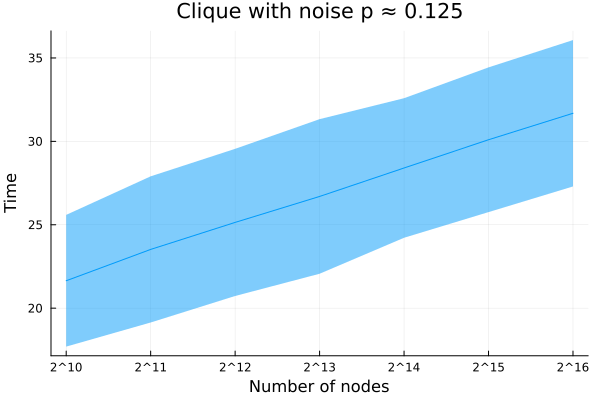}
        \caption{Average convergence time to almost-consensus in a  with noise parameter \(p = 1/8\).}
        \label{fig:experiments:vic-majority:-erdos:a}
    \end{subfigure}
    \hspace*{1cm}
    \begin{subfigure}{0.4\textwidth}
        \centering
        \includegraphics[scale=.35]{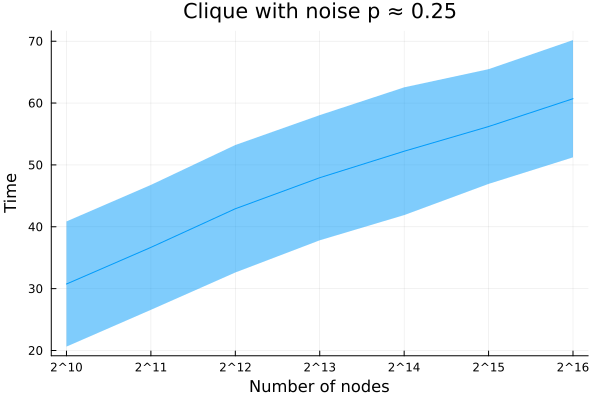}
        \caption{Average convergence time to almost-consensus in a  with noise parameter \(p = 1/4\).}
        \label{fig:experiments:vic-majority:-erdos:b}
    \end{subfigure}
    \\
    \bigskip
    \begin{subfigure}{0.4\textwidth}
        \centering
        \includegraphics[scale=.35]{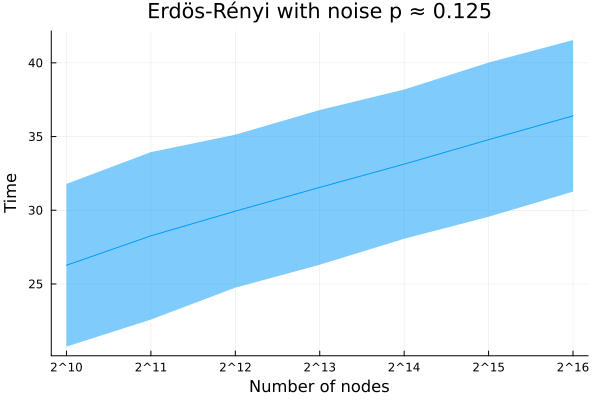}
        \caption{Average convergence time to almost-consensus in an Erdos-Rényi graph \(G_{n,q}\) with \(q = 1.25 \cdot \ln(n)/n \) and noise parameter \(p = 1/8\).}
        \label{fig:experiments:vic-majority:-erdos:c}
    \end{subfigure}
    \hspace*{1cm}
    \begin{subfigure}{0.4\textwidth}
        \centering
        \includegraphics[scale=.35]{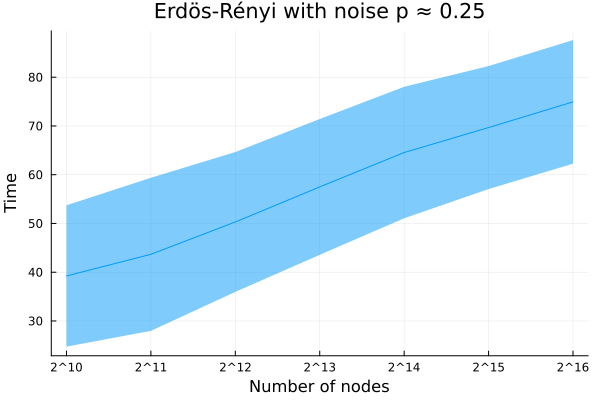}
        \caption{Average convergence time to almost-consensus in an Erdos-Rényi graph \(G_{n,q}\) with \(q = 1.25 \cdot \ln(n)/n \) and noise parameter \(p = 1/4\).}
        \label{fig:experiments:vic-majority:-erdos:d}
    \end{subfigure}
    \caption{Average convergence times to almost-consensus in s and Erdos-Rényi graphs. The average is computed over 1000 different runs, and the random underlying graph is sampled at each run. The shaded areas represent the sample standard deviations.}
    \label{fig:experiments:vic-majority:-erdos}
\end{figure}

\begin{figure}[ht]
    \centering
    \begin{subfigure}{0.4\textwidth}
        \centering
        \includegraphics[scale=.35]{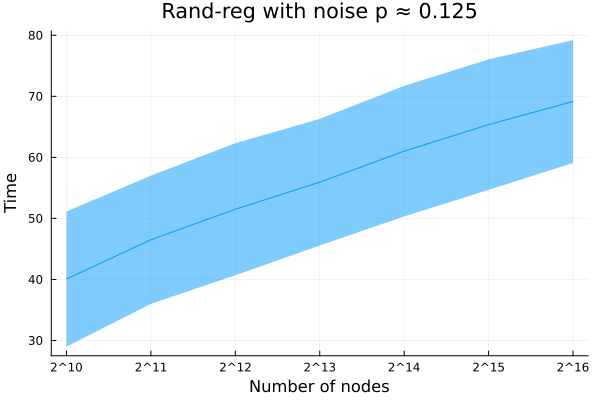}
        \caption{Average convergence time to almost-consensus in a configuration graph with degree \(d = 5\) with noise parameter \(p = 1/8\).}
        \label{fig:experiments:vic-majority:rand-reg:a}
    \end{subfigure}
    \hspace*{1cm}
    \begin{subfigure}{0.4\textwidth}
        \centering
        \includegraphics[scale=.35]{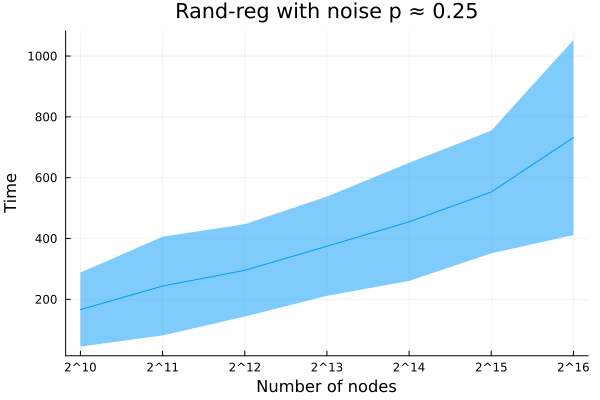}
        \caption{Average convergence time to almost-consensus in a configuration graph with degree \(d = 5\) with noise parameter \(p = 1/4\).}
        \label{fig:experiments:vic-majority:rand-reg:b}
    \end{subfigure}
    \\
    \bigskip
    \begin{subfigure}{0.4\textwidth}
        \centering
        \includegraphics[scale=.35]{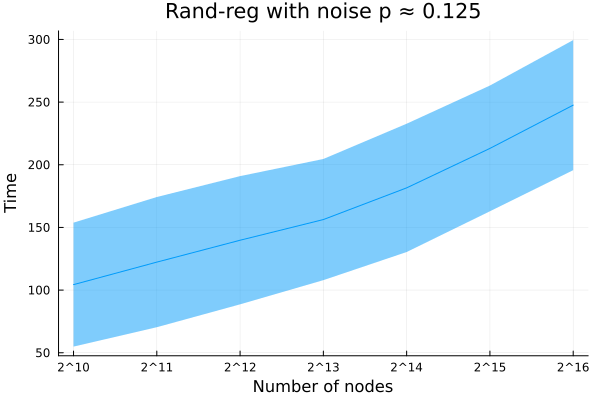}
        \caption{Average convergence time to almost-consensus in a configuration graph with degree \(d = 3\) and noise parameter \(p = 1/8\).}
        \label{fig:experiments:vic-majority:rand-reg:c}
    \end{subfigure}
    \hspace*{1cm}
    \begin{subfigure}{0.4\textwidth}
        \centering
        \includegraphics[scale=.35]{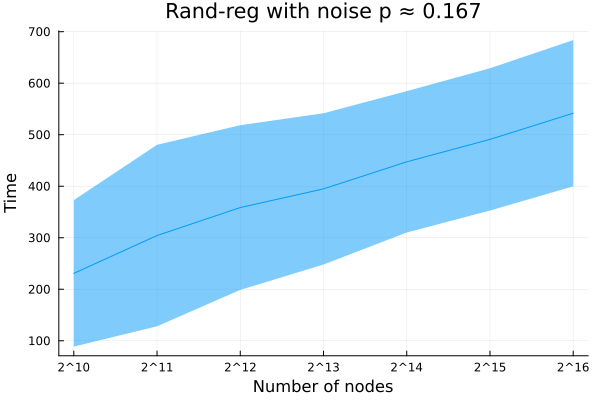}
        \caption{Average convergence time to almost-consensus in a configuration graph with degree \(d = 3\) and noise parameter \(p = 1/6\).}
        \label{fig:experiments:vic-majority:rand-reg:d}
    \end{subfigure}
    \caption{Average convergence times to almost-consensus in random regular graphs with degrees 5 and 3. The average is computed over 1000 different runs, and the random underlying graph is sampled at each run. The shaded areas represent the sample standard deviations.}
    \label{fig:experiments:vic-majority:rand-reg}
\end{figure}

\begin{figure}[ht]
    \centering
    \begin{subfigure}{0.4\textwidth}
        \centering
        \includegraphics[scale=.35]{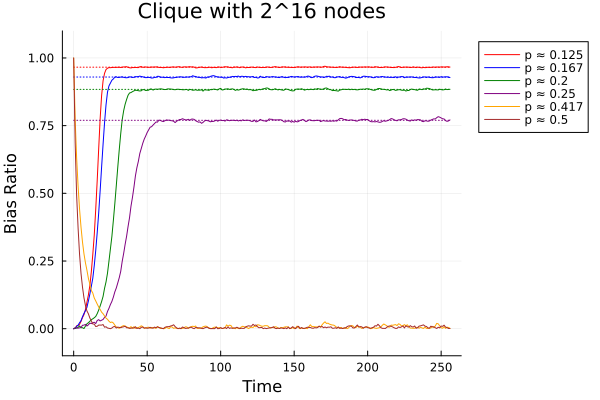}
        \caption{Behavior of the bias in a clique for different noise values.}
        \label{fig:experiments:diff-noise:clique}
    \end{subfigure}
    \hspace*{1cm}
    \begin{subfigure}{0.4\textwidth}
        \centering
        \includegraphics[scale=.35]{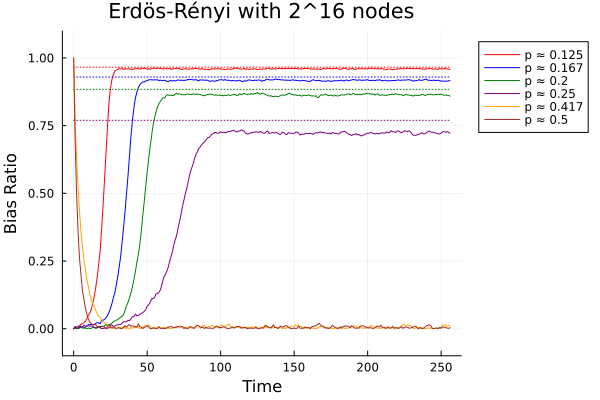}
        \caption{Behavior of the bias in an Erdos-Rényi graph \(G_{n,q}\) with \(q = 1.25 \cdot \ln(n)/n \) for different noise values.}
        \label{fig:experiments:diff-noise:erdos}
    \end{subfigure}
    \\
    \bigskip
    \begin{subfigure}{0.4\textwidth}
        \centering
        \includegraphics[scale=.35]{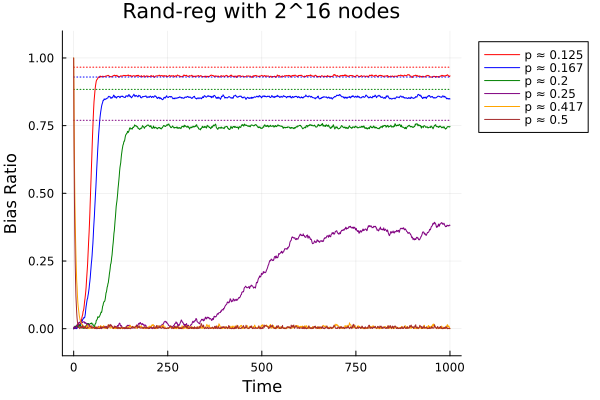}
        \caption{Behavior of the bias in  random regular graph with degree \(d = 5\) for different noise values.}
        \label{fig:experiments:diff-noise:rand-reg:c}
    \end{subfigure}
    \hspace*{1cm}
    \begin{subfigure}{0.4\textwidth}
        \centering
        \includegraphics[scale=.35]{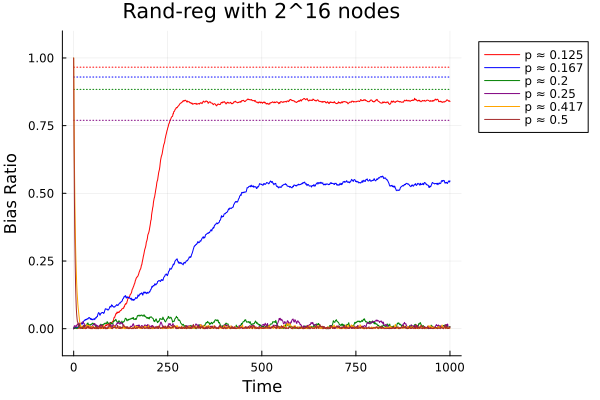}
        \caption{Behavior of the bias in  a random regular graph with degree \(d = 3\) for different noise values.}
        \label{fig:experiments:diff-noise:rand-reg:d}
    \end{subfigure}
    \caption{Behaviors of the bias ratio \(\abs{\frac{\text{bias}}{\text{size}}}\) in different topologies and for different noise values over a single run of the dynamics. The noise parameter \(p\) is chosen from set \(\{1/8,1/6,1/5,1/4,5/12,1/2\}\).
    The dotted colored lines represent the bias ratio's equilibria in cliques, depending on the corresponding noise value.
    All graphs have \(2^{16}\) nodes.
    For the sake of readability, we report here the values of the noise parameters for each colored line: {\color{red} red} = 1/8, 
    {\color{blue} blue} = 1/6, {
    \color{Green} green} = 1/5, 
    {\color{violet} purple} = 1/4, 
    {\color{YellowOrange} gold} = 5/12, 
    {\color{Brown} brown} = 1/2.}
    \label{fig:experiments:diff-noise}
\end{figure}

\clearpage

	\bibliographystyle{abbrv}
	
	\bibliography{biblio}

	\appendix
\section{Tools}\label{app:tools}

We make use of the following general result on (super/sub)-martingales, which can be found in \cite{lehre2014}.

\begin{lemma}\label{lemma:drift_analysis}
	Let $ \{ X_t\}_{t\in \nat} $ be a stochastic process adapted to a filtration $ \{\filtration{t}\}_{t\in \nat} $, over some state space $ S \subseteq \{0\}\cup [\xmin, \xmax] $, where $ \xmin \ge 0 $. Let $  h : [\xmin, \xmax] \to \real^+ $ be a function such that $ 1/h(x) $ is integrable and $ h(x) $ differentiable on $ [\xmin, \xmax] $. Define $ T \coloneqq \min\{t \mid X_t = 0\}$. Then, the followings hold.
	\begin{enumerate}[(i)]
		\item If $ \expect{X_t - X_{t+1} \mid X_t \ge \xmin, \filtration{t} } \ge h(X_t) $ and $ \frac{\diff}{\diff x} h(x) \ge 0$, then 
		\[
		\expect{T \mid X_0} \le \frac{\xmin}{h(\xmin)} + \int_{\xmin}^{X_0} \frac{1}{h(y)} \ \diff y \text{.} 
		\] \label{item:expect-convergence-drift-upbound}
		\item If $ \expect{X_t - X_{t+1} \mid X_t \ge \xmin, \filtration{t} } \le h(X_t) $ and $ \frac{\diff}{\diff x} h(x) \le 0$, then 
		\[
		\expect{T \mid X_0} \ge \frac{\xmin}{h(\xmin)} + \int_{\xmin}^{X_0} \frac{1}{h(y)} \ \diff y \text{.} 
		\] \label{item:expect-convergence-drift-lowbound}
		\item If $ \expect{X_t - X_{t+1} \mid X_t \ge \xmin, \filtration{t} } \ge h(X_t) $ and $ \frac{\diff}{\diff x} h(x) \ge \lambda$ for some $ \lambda > 0 $, then 
		\[
		\pr{T > t \mid X_0} < \exp\left (-\lambda\left (t - \frac{\xmin}{h(\xmin)} - \int_{\xmin}^{X_0} \frac{1}{h(y)} \ \diff y \right )\right ) \text{.} 
		\] \label{item:prob-convergence-drift-upbound}
		\item If $ \expect{X_t - X_{t+1} \mid X_t \ge \xmin, \filtration{t} } \le h(X_t) $ and $ \frac{\diff}{\diff x} h(x) \le - \lambda$ for some $ \lambda > 0 $, then 
		\[
		\pr{T < t \mid X_0} < \frac{e^{\lambda t} - e^\lambda}{e^\lambda - 1}\exp\left (-\lambda\left ( \frac{\xmin}{h(\xmin)} + \int_{\xmin}^{X_0} \frac{1}{h(y)} \ \diff y \right )\right ) \text{.} 
		\] \label{item:prob-convergence-drift-lowbound}
	\end{enumerate}
\end{lemma}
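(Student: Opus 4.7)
The plan is to reduce all four parts of the lemma to a single underlying \emph{potential transformation} that converts the variable drift on $\{X_t\}$ into an additive one. Define
$$g(x) = \begin{cases} 0 & \text{if } x = 0, \\ \dfrac{\xmin}{h(\xmin)} + \displaystyle\int_{\xmin}^{x} \frac{1}{h(y)} \,\diff y & \text{if } x \in [\xmin, \xmax], \end{cases}$$
extended continuously (e.g.\ linearly) on $(0,\xmin)$, so that $g$ is non-negative, of class $C^1$ on $[\xmin,\xmax]$ with $g'(y) = 1/h(y)$, and the constant $\xmin/h(\xmin)$ absorbs the ``jump'' across the absorbing boundary at $0$. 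Under the hypotheses of (i) and (iii) $h$ is non-decreasing, so $1/h$ is non-increasing and $g$ is concave on $[0,\xmax]$; under those of (ii) and (iv) the symmetric situation makes $g$ convex.

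For (i) and (ii), I would first invoke the first-order concavity inequality $g(y) \le g(x) + g'(x)(y-x)$, specialized to $y = X_{t+1}$, $x = X_t$, to obtain pointwise
$$g(X_t) - g(X_{t+1}) \ge \frac{X_t - X_{t+1}}{h(X_t)}.$$
Taking conditional expectation and using the drift hypothesis $\expect{X_t - X_{t+1} \given \filtration{t}, X_t \ge \xmin} \ge h(X_t)$ then yields the additive drift $\expect{g(X_t) - g(X_{t+1}) \given \filtration{t}, X_t \ge \xmin} \ge 1$. A standard optional-stopping argument applied to the non-negative supermartingale $g(X_{t\wedge T}) + (t\wedge T)$ delivers $\expect{T \given X_0} \le g(X_0)$, which is (i); part (ii) follows by reversing every inequality (convexity of $g$ gives the opposite tangent bound, and one obtains a submartingale instead).

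For (iii), the extra condition $h'(x) \ge \lambda$ is used to upgrade the additive drift to a multiplicative one at the level of the exponential potential $\phi(x) := e^{\lambda g(x)}$. A direct computation gives
$$\phi''(x) = \frac{\lambda\,\phi(x)}{h(x)^2}\bigl(\lambda - h'(x)\bigr) \le 0,$$
so $\phi$ is concave and non-decreasing on $[\xmin,\xmax]$. Applying Jensen's inequality together with the drift hypothesis and with the observation that $g(X_t) - g(X_t - h(X_t)) \ge 1$ (again from $g' = 1/h$ and monotonicity of $h$), one obtains
$$\expect{\phi(X_{t+1}) \given \filtration{t}, X_t \ge \xmin} \le \phi\bigl(\expect{X_{t+1} \given \filtration{t}}\bigr) \le \phi\bigl(X_t - h(X_t)\bigr) \le e^{-\lambda}\phi(X_t).$$
Hence $\martingale{t} := e^{\lambda t}\phi(X_t) = e^{\lambda(t + g(X_t))}$ is a supermartingale up to $T$. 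Since $g(X_t) \ge 0$ on $\{t < T\}$, Markov's inequality gives
$$\pr{T > t \given X_0} \le \pr{\martingale{t} \ge e^{\lambda t} \given X_0} \le e^{-\lambda t}\,\expect{\martingale{0}} = e^{-\lambda(t - g(X_0))},$$
which is exactly (iii). Part (iv) is proved symmetrically, working with $\phi(x) = e^{-\lambda g(x)}$ (now convex under $h' \le -\lambda$) and converting the resulting submartingale into a lower tail bound; the finite-horizon geometric prefactor $(e^{\lambda t}-e^{\lambda})/(e^{\lambda}-1)$ arises by summing the one-step submartingale increments along $1, 2, \ldots, t$.

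The main obstacle is the second-order computation that establishes the exponential (sub/super)martingale property in (iii)/(iv): one must verify the sign of $\phi''$ and control the loss in Jensen's inequality by precisely the margin provided by the bound on $h'$. A secondary technical point is the correct handling of absorption at $0$ and of transitions that overshoot $\xmin$ down to $0$ in a single step; this is exactly what motivates both the constant $\xmin/h(\xmin)$ added to $g$ and the continuous extension on $(0,\xmin)$, ensuring that the supermartingale inequality survives the boundary.
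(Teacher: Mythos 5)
First, a point of reference: the paper never proves \cref{lemma:drift_analysis} at all; it is stated in \cref{app:tools} purely as an imported tool, with a pointer to \cite{lehre2014}. So there is no internal proof to compare yours against, and I assess your proposal on its own merits. Your plan for parts (i) and (ii) is correct, and it is the standard proof of variable-drift statements: with $g'(y)=1/h(y)$, concavity of $g$ (case $h'\ge 0$) gives the tangent bound $g(X_{t+1})\le g(X_t)+g'(X_t)(X_{t+1}-X_t)$, hence additive drift $\expect{g(X_t)-g(X_{t+1})\given \filtration{t}, X_t\ge\xmin}\ge 1$, and optional stopping plus monotone convergence yields (i); the convex case yields (ii). You also correctly isolate the boundary subtlety: the tangent bound must survive the jump $X_{t+1}=0$, and it does exactly because of the additive constant $\xmin/h(\xmin)$, since $g(x)-xg'(x)$ vanishes at $x=\xmin$ and $\frac{\diff}{\diff x}\left(g(x)-xg'(x)\right)=\frac{xh'(x)}{h(x)^2}\ge 0$.

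Parts (iii) and (iv), however, contain a genuine gap at the Jensen step, and it is not a repairable technicality. The inequality $\expect{\phi(X_{t+1})\given\filtration{t}}\le\phi\left(\expect{X_{t+1}\given\filtration{t}}\right)$ needs $\phi=e^{\lambda g}$ to be concave on the convex hull of the one-step support of $X_{t+1}$, and this hull contains the absorbing state $0$ --- absorption is exactly the event the lemma is about. Your computation gives $\phi''\le0$ only on $[\xmin,\xmax]$; a concave extension to $[0,\xmax]$ taking the value $\phi(0)=e^{\lambda g(0)}=1$ would require the chord slope $\left(\phi(\xmin)-\phi(0)\right)/\xmin$ to be at least $\phi'(\xmin)$, i.e.\ $e^{u}(1-u)\ge 1$ with $u=\lambda\xmin/h(\xmin)$, which is false for every $u>0$ (the function $e^{u}(1-u)$ is strictly decreasing from $1$). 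So for $\xmin>0$ the Jensen step is never justified, and indeed the supermartingale property you claim can fail under the lemma's hypotheses: take $\xmin=1$, $\xmax\ge 2$, $h(x)=\epsilon+\lambda(x-1)$ with $\epsilon\ll\lambda$, and from state $\xmin$ let the process jump to $\xmax$ with probability $(1-\epsilon)/\xmax$ and to $0$ otherwise. Then $\expect{X_t-X_{t+1}\given X_t=\xmin}=\epsilon=h(\xmin)$, so the drift hypothesis holds with equality, while
\[
\expect{\phi(X_{t+1})\given X_t=\xmin}\ \ge\ \frac{1-\epsilon}{\xmax}\,\phi(\xmax)\ =\ (1-\epsilon)\,\frac{h(\xmax)}{\epsilon\,\xmax}\,e^{\lambda/\epsilon}\ \ge\ \frac{(1-\epsilon)\lambda}{2\epsilon}\,e^{\lambda/\epsilon}\ \gg\ e^{-\lambda}\phi(\xmin)=e^{\lambda/\epsilon-\lambda}.
\]
Killing the process at $0$ does not help, since the offending term is the up-jump, not the absorption. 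The conceptual obstruction is that a first-moment (drift) hypothesis on the $X$-increments cannot control exponential moments of the $g$-increments: an upward jump to $\xmax$ costs only $\mybigo{1}$ of drift when its probability is $\mybigo{1/\xmax}$, yet it multiplies $e^{\lambda g}$ by $e^{\lambda\left(g(\xmax)-g(\xmin)\right)}\approx h(\xmax)/h(\xmin)$, which is unbounded as $h(\xmin)\to0$.

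Worse, the defect propagates to the statement itself: chaining such cheap up-jumps (from $\xmin$ jump to $\xmax$ with probability $\approx 1/\xmax$, then descend at the slowest admissible rate $x\mapsto x-h(x)$, and repeat) produces processes satisfying every hypothesis of (iii) whose survival probability decays per round only at rate $\ln(2\xmax)$ divided by the excursion length $\approx\frac{1}{\lambda}\ln\left(h(\xmax)/h(\xmin)\right)$; this rate drops below $\lambda$ once $h(\xmin)$ is small enough, so for large $t$ the true $\pr{T>t\given X_0}$ exceeds the claimed bound. Hence (iii) --- and symmetrically (iv), where one-step jumps to $0$ are the cheap-in-drift but huge-in-$e^{\lambda g}$ moves --- cannot be derived from the drift condition and the bound on $h'$ alone; any correct proof must invoke an additional hypothesis controlling the jumps, such as the moment-generating-function conditions on the increments of $g(X_t)$ that appear in the general drift theorems of \cite{lehre2014}. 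For the application in this paper, the honest route would be to verify such an mgf condition directly for the bias process, whose one-step increments are sums of bounded i.i.d.\ contributions and therefore well concentrated.
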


For an overview on the forms of Chernoff bounds see \cite{dubhashi2009}.
\begin{lemma}[Multiplicative forms of Chernoff bounds]\label{lemma:chernoff:multiplicative}
	Let $X_1, X_2, \dots, X_n$ be independent $\{0,1\}$ random variables. Let $X = \sum_{i=1}^n X_i$ and $\mu=\expect{X}$. Then:
	\begin{enumerate}
		\item[(i)] for any $\delta \in (0,1)$ and $\mu \le \mu_+ \le n$, it holds that 
		\begin{align}\label{eq:MCB+}
		\pr{X\ge (1+\delta)\mu_+}\le e^{-\frac{1}{3}\delta^2\mu_+};
		\end{align}
		\item[(ii)] for any $\delta \in (0,1)$ and $0 \le \mu_- \le \mu$, it holds that 
		\begin{align}\label{eq:MCB-}
		\pr{X \le (1-\delta)\mu_-}\le e^{-\frac{1}{2}\delta^2\mu_-}.
		\end{align}
	\end{enumerate}
\end{lemma}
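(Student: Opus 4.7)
The plan is to invoke the classical Cram\'er--Chernoff (exponential moment) method: apply Markov's inequality to $e^{\lambda X}$, use independence of the $X_i$ to factor the moment generating function, optimize in $\lambda$, and then collapse the resulting transcendental expression to the stated clean exponential bound via two elementary one-variable inequalities.

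For claim (i), I would first fix $\lambda > 0$ and write
\[
\pr{X \ge (1+\delta)\mu_+} = \pr{e^{\lambda X} \ge e^{\lambda(1+\delta)\mu_+}} \le e^{-\lambda(1+\delta)\mu_+} \expect{e^{\lambda X}}.
\]
Independence factors the MGF as $\prod_i \expect{e^{\lambda X_i}}$, and for each Bernoulli $X_i$ with $\pr{X_i = 1} = p_i$ the bound $\expect{e^{\lambda X_i}} = 1 + p_i(e^\lambda - 1) \le \exp(p_i(e^\lambda - 1))$ is just $1+x \le e^x$. Multiplying and using $\sum_i p_i = \mu \le \mu_+$ together with $e^\lambda - 1 > 0$ yields $\expect{e^{\lambda X}} \le \exp(\mu_+(e^\lambda - 1))$. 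Plugging in the standard optimizer $\lambda = \ln(1+\delta)$ then produces the intermediate Chernoff form $\left(e^\delta/(1+\delta)^{1+\delta}\right)^{\mu_+}$.

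For claim (ii) the argument is symmetric but with $\lambda < 0$: since now $e^\lambda - 1 < 0$, the hypothesis $\mu_- \le \mu$ is used in the opposite direction to obtain $\expect{e^{\lambda X}} \le \exp(\mu_-(e^\lambda - 1))$, and optimizing at $\lambda = \ln(1-\delta) < 0$ yields the intermediate bound $\left(e^{-\delta}/(1-\delta)^{1-\delta}\right)^{\mu_-}$.

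The only real (and quite minor) obstacle is collapsing these intermediate expressions to the stated Gaussian-tail form. For the upper tail one needs $(1+\delta)\ln(1+\delta) - \delta \ge \delta^2/3$ on $(0,1)$; for the lower tail one needs the sharper $(1-\delta)\ln(1-\delta) + \delta \ge \delta^2/2$ on the same range. Each reduces to checking that a suitable function vanishes at $\delta = 0$ and has non-negative first (and where needed, second) derivative on $(0,1)$, and it is precisely the asymmetry between these two one-variable inequalities that produces the different constants $1/3$ and $1/2$ appearing in the two parts of the statement.
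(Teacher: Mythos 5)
The paper does not prove this lemma: it is stated in the appendix as a standard tool with a pointer to the literature (\cite{dubhashi2009}). Your proposal is the standard Cram\'er--Chernoff argument and is correct as outlined, including the two calculus facts $(1+\delta)\ln(1+\delta)-\delta\ge\delta^2/3$ and $(1-\delta)\ln(1-\delta)+\delta\ge\delta^2/2$ on $(0,1)$, which do yield the stated constants $1/3$ and $1/2$.
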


We also make use of the Hoeffding bounds \cite{mitzenmacher2005}.
\begin{lemma}[Hoeffding bounds]\label{lemma:hoeffding}
	Let $ 0 < a < b $ be two constants. Let $X_1, X_2, \dots, X_n$ be independent random variables such that $ \pr{a \le X_i \le b} = 1 $ for all $ i \in [n] $. Let $X = \sum_{i=1}^n X_i$ and $ \expect{X} = \mu$. Then:
	\begin{enumerate}
		\item[(i)] for any $t > 0$ and $\mu \le \mu_+ $, it holds that 
		\begin{align}\label{eq:HB+}
		\pr{X \ge \mu_+ + t} \le \exp\left (- \frac{2t^2 }{n(b-a)^2}\right );
		\end{align}
		\item[(ii)] for any $t > 0$ and $0 \le \mu_- \le \mu$, it holds that 
		\begin{align}\label{eq:HB-}
		\pr{X \le \mu_- - t} \le \exp\left (- \frac{2t^2 }{n(b-a)^2}\right ).
		\end{align}
	\end{enumerate}
\end{lemma}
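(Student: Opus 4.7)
The plan is to follow the classical Chernoff-style proof of Hoeffding's inequality, adapted to the slight generalization that the threshold $\mu_+$ (resp.\ $\mu_-$) is allowed to be at least (resp.\ at most) the true mean $\mu$. First, I would reduce to the centered case: since $\mu_+ \ge \mu$, the event $\{X \ge \mu_+ + t\}$ is contained in $\{X - \mu \ge t\}$, so it suffices to prove the standard bound $\pr{X - \mu \ge t} \le \exp(-2t^2 / (n(b-a)^2))$ in order to deduce (i). Claim (ii) then follows by applying (i) to the independent variables $\{-X_i\}_{i \in [n]}$, whose common support is contained in the interval $[-b, -a]$ of length $b-a$, observing that $\{X \le \mu_- - t\} = \{-X \ge -\mu_- + t\}$ with $-\mu_- \ge -\mu$.

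Next I would invoke the exponential moment method: for any $\lambda > 0$, Markov's inequality applied to $e^{\lambda(X - \mu)}$ gives
\[
\pr{X - \mu \ge t} \le e^{-\lambda t} \expect{e^{\lambda(X - \mu)}} = e^{-\lambda t} \prod_{i=1}^n \expect{e^{\lambda(X_i - \expect{X_i})}},
\]
where the equality uses the independence of the $X_i$'s. The crux is then the following auxiliary fact (\emph{Hoeffding's lemma}): if $Y$ is a centered random variable supported in an interval of length $\ell$, then $\expect{e^{\lambda Y}} \le \exp(\lambda^2 \ell^2 / 8)$. Each centered variable $Y_i \coloneqq X_i - \expect{X_i}$ is supported in an interval of length at most $b - a$, so applying the lemma $n$ times yields $\pr{X - \mu \ge t} \le \exp(-\lambda t + n \lambda^2 (b-a)^2 / 8)$. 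Optimizing this bound by choosing $\lambda = 4t / (n(b-a)^2)$ produces exactly the target exponent $-2t^2 / (n(b-a)^2)$.

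The main technical obstacle is the proof of Hoeffding's lemma, which is where the real work sits. To prove it, I would exploit convexity of $y \mapsto e^{\lambda y}$: writing the support as $[a', b']$ (which, after centering, satisfies $a' \le 0 \le b'$), one has the pointwise inequality $e^{\lambda y} \le \frac{b' - y}{b' - a'} e^{\lambda a'} + \frac{y - a'}{b' - a'} e^{\lambda b'}$ for all $y \in [a', b']$. Taking expectations and using $\expect{Y} = 0$ gives $\expect{e^{\lambda Y}} \le \frac{b'}{b' - a'} e^{\lambda a'} - \frac{a'}{b' - a'} e^{\lambda b'}$, which can be rewritten as $e^{\phi(u)}$ with $u \coloneqq \lambda(b' - a')$ for an explicit function $\phi$. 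A direct computation shows $\phi(0) = \phi'(0) = 0$ and $\phi''(u) \le 1/4$ everywhere, so a second-order Taylor expansion yields $\phi(u) \le u^2 / 8$, i.e.\ $\expect{e^{\lambda Y}} \le \exp(\lambda^2 (b' - a')^2 / 8) \le \exp(\lambda^2 \ell^2 / 8)$, as needed. Plugging this back into the Chernoff-style bound completes the argument.
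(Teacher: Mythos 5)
Your proof is correct: it is the classical Chernoff-method-plus-Hoeffding's-lemma argument (reduction to the centered case, exponential Markov bound, the convexity bound $\phi''(u)\le 1/4$, and optimization at $\lambda = 4t/(n(b-a)^2)$), which is precisely the proof found in the textbook the paper cites for this lemma --- the paper itself states it as a known tool and gives no proof of its own. One harmless remark: invoking claim (i) \emph{verbatim} for the variables $-X_i$ supported in $[-b,-a]$ conflicts with the stated hypothesis $0 < a < b$, but since your proof of the centered bound $\mathbb{P}\left[X - \mu \ge t\right] \le \exp\left(-2t^2/(n(b-a)^2)\right)$ never uses positivity of the support, the deduction of (ii) is valid as written.
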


We make use of the following result which makes explicit the convergence ``speed'' in the central-limit theorem.
\begin{lemma}[Berry-Esseen]\label{lemma:berry-eseen}
	Let $X_1, \dots, X_n$ be $n$ i.i.d.\ (either discrete or continuous) random variables with zero mean, variance $\sigma^2>0$, and finite third moment. Let $Z$ the standard normal random variable, with zero mean and variance equal to 1. Let $F_n(x)$ be the cumulative function of $ \frac{S_n}{\sigma\sqrt{n}}$, where $S_n = \sum_{i=1}^n X_i$, and $\Phi(x)$ that of $Z$. Then, there exists a positive constant $C>0$ such that
	\[
	\sup_{x\in \real} \abs{F_n(x) - \Phi(x)} \le \frac{C}{\sqrt{n}}
	\]
	for all $n\ge 1$.
\end{lemma}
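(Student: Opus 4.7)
The plan is to prove the statement via the classical Fourier-analytic approach, centered on Esseen's smoothing inequality. Write $\phi$ for the characteristic function of $X_1/\sigma$ (so zero mean, unit variance), so that the characteristic function of $S_n/(\sigma\sqrt{n})$ is $\phi_n(t)=\phi(t/\sqrt{n})^n$, and the target is $\psi(t)=e^{-t^2/2}$. The first ingredient I would establish is Esseen's smoothing inequality: for any distribution function $F$ and any distribution function $G$ with bounded density $\le m$,
\[
\sup_{x\in\real}|F(x)-G(x)| \;\le\; \frac{1}{\pi}\int_{-T}^{T}\left|\frac{\hat F(t)-\hat G(t)}{t}\right|dt \;+\; \frac{24 m}{\pi T}
\]
for every $T>0$. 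The proof of this inequality proceeds by convolving $F-G$ with an explicit smoothing kernel whose Fourier transform is compactly supported (for example, the kernel whose transform is the Fej\'er tent on $[-T,T]$) and then applying Fourier inversion and the bounded-density assumption on $G$ to control the smoothing error.

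Next I would produce a pointwise bound on $|\phi_n(t)-\psi(t)|$ on a large window. By the finite third-moment hypothesis and a Taylor expansion with integral remainder, $\phi(u)=1-u^2/2+\theta(u)$ with $|\theta(u)|\le \rho_3|u|^3/6$, where $\rho_3=\expect{|X_1/\sigma|^3}<\infty$. Substituting $u=t/\sqrt{n}$ and writing $\phi_n(t)=\exp(n\log\phi(t/\sqrt{n}))$, expanding $\log\phi(u)=-u^2/2+O(\rho_3|u|^3)$ in the regime $|u|\le c/\rho_3$, and using the elementary bound $|e^a-e^b|\le |a-b|\max(e^{\Re a},e^{\Re b})$, I obtain
\[
|\phi_n(t)-\psi(t)| \;\le\; C\,\rho_3\,\frac{|t|^3}{\sqrt{n}}\,e^{-t^2/4}
\qquad\text{for } |t|\le c_0\sqrt{n}/\rho_3.
\]

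The final step is to plug this into Esseen's inequality with $T=c_0\sqrt{n}/\rho_3$ and $m=(2\pi)^{-1/2}$. The integral is then bounded by
\[
\frac{C\rho_3}{\sqrt{n}}\int_{\real}t^2 e^{-t^2/4}\,dt \;=\; \mybigo{\rho_3/\sqrt{n}},
\]
while the boundary term $24m/(\pi T)$ is also $\mybigo{\rho_3/\sqrt{n}}$. Adding them yields $\sup_{x}|F_n(x)-\Phi(x)|\le C'\rho_3/\sqrt{n}$, which is the claimed bound, the constant $C=C'\rho_3$ depending only on the common distribution of the $X_i$'s. (The case $n=1$ is covered by choosing $C$ so that the bound is at least $1$, since the left-hand side is always $\le 1$.)

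The main obstacle is Esseen's smoothing inequality itself; its derivation requires picking the right smoothing kernel, a clean Fourier inversion, and a careful estimate of the mass left unsmoothed, and these steps are the only nontrivial analytic content. A secondary subtlety is that the quadratic-exponential comparison between $\phi_n$ and $\psi$ is valid only on $|t|\lesssim \sqrt{n}/\rho_3$, so $T$ must be chosen inside that window; choosing $T$ larger would force a separate argument for the intermediate range, using $|\phi(u)|\le 1-\eta u^2$ for $|u|$ bounded to extract exponential decay of $|\phi_n|$, but the choice $T=c_0\sqrt{n}/\rho_3$ sidesteps this complication.
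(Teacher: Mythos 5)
The paper itself contains no proof of this lemma: Berry--Esseen is listed in the appendix of tools, alongside the Chernoff and Hoeffding bounds, as a classical black-box result, so there is no internal argument to compare yours against. Your proposal is the standard Fourier-analytic proof and is sound in outline: Esseen's smoothing inequality with the constant $24m/(\pi T)$ is stated correctly; the third-order Taylor bound $\abs{\phi(u)-1+u^2/2}\le \rho_3\abs{u}^3/6$ is the right estimate under the finite-third-moment hypothesis; and your choice $T=c_0\sqrt{n}/\rho_3$ correctly confines the characteristic-function comparison to the window where $\log\phi$ is well defined and $\Re\bigl(n\log\phi(t/\sqrt{n})\bigr)\le -t^2/4$, which is exactly what makes the final integral and the boundary term both $\mathcal{O}(\rho_3/\sqrt{n})$. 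Two remarks. First, your constant absorbs $\rho_3$ and hence depends on the common distribution of the $X_i$; this is weaker than the sharpest form of Berry--Esseen (universal constant times $\rho_3$) but is precisely what the lemma as stated requires, and your handling of $n=1$ is fine since the left-hand side never exceeds $1$. Second, the genuinely nontrivial analytic content --- the proof of the smoothing inequality via a kernel with compactly supported Fourier transform --- is only gestured at, so as written this is a correct and well-organized sketch rather than a complete proof; given that the paper treats the lemma as a citation-level tool, that level of detail is the appropriate one, but a referee asking for self-containedness would want the smoothing lemma either proved or attributed to a standard reference (e.g.\ Feller, Vol.~II).
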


Finally, we use some anti-concentration inequalities know as reverse Chernoff bounds. The proof can be found in the appendix of \cite{klein1999O}.

\begin{lemma}[Reverse Chernoff bounds]\label{lemma:chernoff-reverse}
	Let $X_1, X_2, \dots, X_n$ be i.i.d.\  $\{0,1\}$ random variables. Let $X = \sum_{i=1}^n X_i$ and $\mu=\expect{X}$, with $\mu \le n/2$. Furthermore, let $\delta \in (0, 1/2]$ be a constant. If $\delta^2 \mu \ge 3$, then:
	\begin{enumerate}
		\item[(i)] for any $\mu \le \mu_+ \le n$, it holds that 
		\begin{align}\label{eq:RCB+}
		\pr{X\ge (1+\delta)\mu_+}\ge e^{-9 \delta^2 \mu_+};
		\end{align}
		\item[(ii)] for any $0 \le \mu_- \le \mu$, it holds that 
		\begin{align}\label{eq:RCB-}
		\pr{X \le (1-\delta)\mu_-}\ge e^{-9\delta^2\mu_-}.
		\end{align}
	\end{enumerate}
\end{lemma}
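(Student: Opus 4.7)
The plan is to use the classical ``lower-bound-the-tail-by-one-term'' strategy combined with Stirling's formula. I will focus on claim (i); claim (ii) will follow by applying (i) to $Y_i := 1 - X_i$, which are i.i.d.\ Bernoulli with mean $1-p \ge 1/2$, and using that $\{X \le (1-\delta)\mu_-\} = \{\sum_i Y_i \ge n - (1-\delta)\mu_-\}$, after a routine translation of the parameters.

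First I would observe that, setting $p := \mu/n$ and $k^\star := \lceil (1+\delta)\mu_+\rceil$, the tail satisfies
\[
    \pr{X\ge (1+\delta)\mu_+} \ \ge\ \pr{X = k^\star} \ =\ \binom{n}{k^\star}p^{k^\star}(1-p)^{n-k^\star}.
\]
Applying the two-sided Stirling estimate $m! = \sqrt{2\pi m}\,(m/e)^m e^{\theta_m}$ with $\theta_m \in (1/(12m{+}1),1/(12m))$ to the three factorials in $\binom{n}{k^\star}$ yields, for an absolute constant $c_1 > 0$,
\[
    \pr{X = k^\star} \ \ge\ \frac{c_1}{\sqrt{n\,p_\star(1-p_\star)}}\, \exp\!\bigl(-n\,D(p_\star \,\|\, p)\bigr),
\]
where $p_\star := k^\star/n$ and $D(q\|p) := q\log(q/p) + (1-q)\log((1-q)/(1-p))$ is the Bernoulli KL divergence. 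This reduces the problem to (a) upper bounding $n D(p_\star\|p)$ by something like $c\,\delta^2 \mu_+$ with a small constant, and (b) absorbing the polynomial prefactor $1/\sqrt{n p_\star (1-p_\star)}$ into the exponent.

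For (a), I would use $\log(1+x)\le x$ on both summands to get the clean inequality $D(p_\star\|p) \le (p_\star - p)^2/(p(1-p_\star))$. Plugging in $p_\star \le (1+\delta)p_+ + 1/n$ together with $p \le p_+ \le 1$, $p \le 1/2$, and $\delta \le 1/2$, a direct algebraic manipulation gives $n D(p_\star\|p) \le 8\delta^2 \mu_+$ (with an additive $O(1)$ from the $\lceil \cdot \rceil$-rounding, absorbed for $n$ large). For (b), the hypothesis $\delta^2\mu \ge 3$ propagates to $\delta^2\mu_+\ge 3$; writing the prefactor as $\exp(-\tfrac12 \log(n p_\star(1-p_\star)))$ and using $\tfrac12\log(n p_\star(1-p_\star)) \le \tfrac12\log \mu_+ \le \delta^2\mu_+$ (since $\tfrac12\log t \le t/6$ for $t \ge 3$, applied to $t=\delta^2\mu_+$), we can fold it into the exponent at a cost of at most one more $\delta^2\mu_+$. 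Combining the two contributions yields the desired $\pr{X \ge (1+\delta)\mu_+} \ge e^{-9\delta^2\mu_+}$.

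The hard part is really the bookkeeping in step (b): without the hypothesis $\delta^2\mu \ge 3$ the $1/\sqrt{\cdot}$ Stirling factor would not fit into the exponent with the cheap constant $9$, and one would be forced either to strengthen the assumption or to weaken the conclusion. A secondary delicate point is the KL estimate when $\mu_+$ is much larger than $\mu$: the naive bound $(p_\star - p)^2/(p(1-p_\star))$ there grows like $\mu_+^2/\mu$ rather than $\delta^2\mu_+$, so one has to split into the subcases $\mu_+ \le 2\mu$ (handled by the Taylor estimate above) and $\mu_+ > 2\mu$ (handled by observing that the condition $(1+\delta)\mu_+ \le n$ together with $p\le 1/2$ makes the KL saturate at $O(\mu_+)$, still well within the target $9\delta^2\mu_+$ once the prefactor is absorbed). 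Once these pieces are in place, (ii) follows instantly from (i) by the $Y_i = 1-X_i$ reflection, since the hypotheses are symmetric under $p \mapsto 1-p$ after swapping the roles of upper and lower tails.
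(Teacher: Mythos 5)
The paper does not actually prove this lemma: it imports it from the appendix of \cite{klein1999O}, so the only meaningful comparison is with that reference, whose proof follows exactly your strategy (lower-bounding the tail by a single binomial term, then Stirling plus an entropy estimate). In the regime where the threshold stays close to the true mean --- say $(1+\delta)\mu_+ - \mu = \mathcal{O}(\delta\mu)$ --- your outline is sound and can be completed with the generous constant $9$.

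The genuine gap is your treatment of the subcase $\mu_+ > 2\mu$. The claim that the condition $(1+\delta)\mu_+ \le n$ makes the KL divergence ``saturate at $\mathcal{O}(\mu_+)$'' is false: with $p_\star = (1+\delta)\mu_+/n$ and $p = \mu/n$, applying $\log x \ge 1 - 1/x$ to the second summand gives $nD(p_\star\|p) \ge (1+\delta)\mu_+\left(\log\frac{(1+\delta)\mu_+}{\mu} - 1\right)$, which grows like $\mu_+\log(\mu_+/\mu)$ and overtakes the target $9\delta^2\mu_+ \le \frac{9}{4}\mu_+$ as soon as $\mu_+/\mu$ is a large constant. (Note also that $(1+\delta)\mu_+ \le n$ is not among the hypotheses: the statement allows $\mu_+ = n$, in which case the event is empty.) No bookkeeping can repair this, because the statement as transcribed in the paper is simply false in that regime: take the $X_i$ i.i.d.\ Bernoulli with $p = n^{-1/2}$, $\delta = 1/2$, $\mu_+ = n/2$; then $\delta^2\mu = \sqrt{n}/4 \ge 3$ for $n$ large, yet $\pr{X \ge 3n/4} \le 2^n p^{3n/4} = e^{n\ln 2 - \frac{3}{8}n\ln n}$, which for all sufficiently large $n$ is far below the claimed $e^{-9n/8}$. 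The lemma proved in \cite{klein1999O} ties the threshold to the true mean (in the present notation, $\mu_+ = \mu$), and that is also the only regime in which the paper uses it (in \cref{lemma:vic_noise_maj_switches} one has $\mu_+ - \mu = \mathcal{O}(\delta\mu_+)$); a correct write-up should restrict to that setting rather than chase the over-general statement. A secondary, also non-routine point: deducing (ii) from (i) via $Y_i = 1 - X_i$ does not work under the stated hypotheses, since after reflection the mean becomes $n - \mu \ge n/2$, violating the hypothesis $\mu \le n/2$ of (i), and the translated condition $\delta'^2(n-\mu) \ge 3$ with $\delta' = \delta\mu_-/(n-\mu_-)$ is not implied by $\delta^2\mu \ge 3$; the lower tail should instead be handled by running the same single-term computation directly on $k^\star = \lfloor(1-\delta)\mu_-\rfloor$.
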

 \end{document}